\def\section{\@startsection {section}{1}{\z@}{-3.5ex plus -1ex minus -.2ex}{2.3ex plus .2ex}{\large\bf}} 
\def\subsection{ \@startsection{subsection}{2}{\z@}{-3.25ex plus -1ex minus -.2ex}{1.5ex plus .2ex}{\normalsize \bf}}
\def\subsubsection{\@startsection{subsubsection}{3}{\z@}{-3.25ex plus -1ex minus -.2ex}{1.5ex plus .2ex}{\normalsize\sl}} 
\newtheorem{theorem}{Theorem}[section]
\newtheorem{def.}[theorem]{Definition}
\newtheorem{prop.}[theorem]{Proposition}
\newtheorem{lem.}[theorem]{Lemma}
\newtheorem{cor.}[theorem]{Corollary}
\newtheorem{conj.}[theorem]{Conjecture}
\newtheorem{Bsp.}{Example}[section]
\newtheorem{rem.}{Remarks}[theorem]
\newtheorem{rem}{Remark}[theorem]
\newenvironment{proof}{\noindent \bf Proof: \rm}{$ \hspace{\stretch{1}} \Box $}
 \newcommand{\bea}{\begin{eqnarray}}
\newcommand{\ena}{\end{eqnarray}}
\newcommand{\beano}{\begin{eqnarray*}}
\newcommand{\enano}{\end{eqnarray*}}
\newcommand{\bei}{\begin{itemize}}
\newcommand{\eni}{\end{itemize}}
\newcommand{\be}{\begin{equation}}
\newcommand{\en}{\end{equation}}
\newcommand{\belem}{\begin{lem.}}
\newcommand{\enlem}{\end{lem.}}
\newcommand{\berem}{\begin{rem} \rm}
\newcommand{\enrem}{\end{rem}}
\newcommand{\norm}[2]{
\left\| #2 \right\|_{#1}
}
\newcommand{\Hil}[0]{\mathcal{H} }
\renewcommand{\ge}{\geqslant}
\renewcommand{\le}{\leqslant}
\renewcommand{\leq}{\leqslant}
\renewcommand{\geq}{\geqslant}
\newcommand{\nn}{\nonumber}
 \newcommand{\noi}{\noindent}
\newcommand{\ha}{^{\rm\textstyle *}}
\newcommand{\haa}{^{\rm\textstyle **}}
 \newcommand{\ov}{\overline}
\newcommand{\RN}{\mathbb{R}}
\newcommand{\ZN}{\mathbb{Z}}
\newcommand{\CN}{{\mathbb C}}
\newcommand{\dom}{{\sf Dom}}
\newcommand{\ran}{{\sf Ran}}
\def\B{{\mathcal B}}
\def\F{{\mathcal F}}
\def\H{{\mathcal H}}
\def\HH{\H}
\def\h{{\mathfrak H}}
\def\s{{\mathfrak S}}
\newcommand{\NN}[0]{\mathbb{N}}
\newcommand{\ud}{\,\mathrm{d}}
\newcommand{\up}{\raisebox{0.7mm}{$\upharpoonright$}}
\newcommand{\ip}[2]{\langle {#2},{#1}\rangle}
\def\<{\langle}
\def\>{\rangle}
\def\xxl{\color{Brown}}
\begin{document}

\begin{flushleft}
{\Large \sc Frames, semi-frames, and Hilbert scales} \vspace*{7mm}

{\large\sf   J-P. Antoine $\!^{\rm a}$\footnote{{\it E-mail address}: jean-pierre.antoine@uclouvain.be} and  
P. Balazs $\!^{\rm b}$\footnote{{\it E-mail address}: peter.balazs@oeaw.ac.at} 
}
\\[3mm]
$^{\rm a}$ \emph{\small Institut de Recherche en Math\'ematique et  Physique, Universit\'e catholique de Louvain \\
\hspace*{3mm}B-1348   Louvain-la-Neuve, Belgium}
\\[1mm]
$^{\rm b}$ \emph{\small Acoustics Research Institute, Austrian Academy of Sciences\\
\hspace*{3mm}A-1040 Vienna, Austria }
\end{flushleft}

 \begin{abstract} 
Given a total sequence in a Hilbert space, 
we speak of an upper (resp. lower) semi-frame if  only the upper (resp. lower)  frame bound is valid. 
Equivalently, for an  upper  semi-frame, the frame operator is bounded, but has an unbounded inverse, 
whereas  a lower  semi-frame  has an unbounded frame operator, with bounded inverse.
  For upper semi-frames, in the discrete and the continuous case, we build two natural
    Hilbert scales which may yield a novel characterization of certain function spaces of interest in signal processing. 
We present some examples and, in addition, some results concerning the  duality between lower and upper semi-frames, as well as some generalizations,
including fusion semi-frames and Banach semi-frames.
\end{abstract}

\noi 
{\bf Keywords} Frames; Semi-frames; Hilbert scales; Fusion frames. 
\\[2mm]
\noi {\bf AMS Subject Classification}  42C15, 42C40; 46C50; 47A70; 65T60.

\section{\sc Introduction}
\label{sec:intro}

Given a separable Hilbert space $\Hil$, one often  needs to expand an arbitrary element $f\in \Hil $ in a sequence of simple, 
basic elements (atoms) $\Psi = (\psi_k), \, k\in \Gamma$, {with $ \Gamma $ a countable index set:
\be\label{eq:expansion}
f =  \sum _{k\in \Gamma} c_{k} \psi_k , 
\en}
where the sum converges in an adequate fashion (e.g. strongly and unconditionally) and the coefficients $c_{k}$ are (preferably) unique and easy to compute. 
There are several possibilities for obtaining  that result. In order of increasing generality, we can require that $\Psi$ be:
\bei
\vspace*{-2mm}\item [(i)] an orthonormal basis: the coefficients are unique, namely, $c_{k}=\ip{\psi_k}{f}$,  the  convergence is unconditional;  
\vspace*{-2mm}\item [(ii)]  a Riesz basis, \emph{i.e.},  $\psi_k = Ve_{k}$, where $V$ is  bounded bijective operator; the coefficients are unique, 
namely,  $c_{k}=\ip{\phi_k}{f}$, 
where   $(\phi_k)$ is a unique Riesz basis dual to $(Ve_{k})$; the  convergence is unconditional;  
\vspace*{-2mm}\item [(iii)] a frame, that is,  there exist constants  ${\sf m}>0$ and ${\sf M}<\infty$ such that
\be
{\rm\sf m} \norm{}{f}^2 \le  \sum _{k\in \Gamma} \left| \ip {\psi_k}{f} \right|^2  \le {\rm\sf M}  \norm{}{f}^2 ,  \forall \, f \in \Hil.
\label{eq:discr-bddframe}
\vspace*{-4mm}\end{equation}
In this case, uniqueness is    lost.
\eni

However, even a frame may be too restrictive, in the sense that it may impossible to satisfy the two frame bounds simultaneously.
Accordingly, we define $\Psi$ to be an \emph{upper (resp. lower) semi-frame} if it is a total set and satisfies the upper (resp. lower) frame inequality.
Then the question  is to find whether the signal can still be reconstructed from its expansion coefficients.

The notion of frame was introduced in 1952 by Duffin and Schaefer \cite{duffschaef1} in the context of nonharmonic analysis. It was revived by 
Daubechies, Grossmann and Meyer \cite{daub-painless} in the early stages of wavelet theory and then became a very popular topic,
 in particular in Gabor and wavelet analysis  \cite{Casaz1,ole1,daubech1,Groech1}. The reason is that a good frame in a Hilbert space is almost as good as an orthonormal  basis for expanding arbitrary elements (albeit non-uniquely) and is sometimes available while the latter is not (e.g. for continuous wavelets \cite{daubech1}).  
 
 All the above concerns sequences, as required in numerical analysis. However, in the meantime, more general objects, called continuous frames, 
emerged in the context of the theory of (generalized) coherent states  and were thoroughly studied by Ali, Gazeau and the first author  
  \cite{jpa-sqintegI,jpa-contframes,jpa-CSbook} (they were introduced independently by Kaiser \cite{kaiser}).
They were studied further by a number of authors, sometimes under a different name, for instance Fornasier and Rauhut
\cite{forn-rauhut},  Rahimi \emph{et al.} \cite{rahimi} or Gabardo and Han \cite{gabardo-han} (see \cite{jpa-bal} for additional  references).

 Let $\Hil$ be a  Hilbert space and   $X$ a  locally compact space with measure $\nu$. Then a \emph{generalized frame} for $\Hil$  is a family of vectors  
   $\Psi:=\{\psi_{x},\, x\in X\}, \,\psi_{x}\in \Hil $,    indexed by points of $X$, 
such that the map  $x \mapsto \ip{f}{\psi_{x}}$  is measurable,  $\forall\,f \in \Hil$, and
\be\label{eq:genframe}
\int_{X}\ip{f}{\psi_{x}}\ip {\psi_{x}}{f'}\, \ud \nu(x) = \ip{f}{Sf'}, \; \forall\, f, f' \in \Hil,
\en
where $S$ is a bounded, positive, self-adjoint, invertible operator on $\Hil $, called the \emph{frame operator}. 

The operator $S$ is invertible, but its inverse $S^{-1}$, while  self-adjoint and positive, need not be bounded.
We say that   $\Psi$ is a \emph{frame}
\footnote{Several  authors, e.g. \cite{gabardo-han},
  call \emph{frame} the map $\breve\psi: X\to \Hil$ given as $ \breve\psi (x) =  \psi_{x}$}
 if $ S^{-1}$ is bounded or, equivalently, if there exist constants  ${\sf m} > 0$  and ${\sf M}<\infty$
 (the  frame bounds) such that 
\be\label{eq:frame}
{\sf m}  \norm{}{f}^2 \leq \ip{f}{Sf} = \int_{X}  |\ip{\psi_{x}}{f}| ^2 \, \ud \nu(x) \leq {\sf M}  \norm{}{f}^2 ,  \forall \, f \in \Hil.
\en
These definitions are completely general. In particular, if $X$ is a discrete set with $\nu$ being a counting measure, 
we recover the standard definition \eqref{eq:discr-bddframe} of a (discrete) frame \cite{Casaz1,ole1,duffschaef1}.

However, in the general case also, there are situations where the notion of frame is too restrictive, in the sense that one cannot satisfy \emph{both} 
frame bounds simultaneously.
 Thus there is room for two natural generalizations, namely,  we  say that
a family  $\Psi$ is   an \emph{upper (resp. lower) semi-frame}, if
 \bei
\vspace*{-1mm}\item[(i)]  {it is total in $\Hil$;}
\vspace*{-2mm}\item[(ii)] {it satisfies the upper (resp. lower) frame inequality in \eqref{eq:frame}.}
\eni
\vspace*{-1mm} Note that the lower frame inequality automatically implies  that the family is total, \emph{i.e.},  (ii) $\Rightarrow$ (i) for a  lower semi-frame.
Also, in the upper case, $S$   is bounded and $S^{-1}$ is unbounded, whereas, in the lower case,  $S$   is unbounded and $S^{-1}$ is bounded.

In the sequel, we shall study these notions, first in the discrete case, then we shall examine how they can be extended to the general case. 
A number of further generalizations will be addressed at the end. A comprehensive analysis of  semi-frame theory may be found in our previous work \cite{jpa-bal}, 
to which we refer for details.  In particular, we have omitted here all the proofs, keeping only what is needed for the paper to be self-contained. 
A striking new result, as compared to  \cite{jpa-bal} is that every upper semi-frame $\Psi$ generates a natural Hilbert scale 
$\{\H_n, n \in \ZN\}$, with $\H_0 = \Hil$, corresponding to the successive powers  $S^{-1/2}$, and another one, $\{\h_n, n \in \ZN\}$,
generated from the first one by the analysis operator. In addition, $\Psi$ yields upper semi-frames in all spaces $\H_n$. An interesting question is whether one can 
identify explicitly the end spaces of the scales and thus obtain a novel characterization of certain sequence or function spaces. As we will see, the answer is positive for
Schwartz' space of fast decreasing sequences.
 
Before proceeding, we list our definitions and  conventions. The framework is
 a (separable) Hilbert space $\Hil$, with the inner product $\ip{\cdot}{\cdot}$ linear in the first factor.
Given an operator $A$ on $\Hil $, we denote its domain by $\dom(A)$, its range by $\ran(A)$  or, shorter, $R_A$, and its kernel by ${\sf Ker}(A)$.
 An operator $A$ in $\Hil$ is called \emph{positive}, resp.  \emph{nonnegative}, if $\ip{h}{Ah}> 0$,  resp. $\ip{h}{Ah} \geq 0, \, \forall\, h\neq 0, h\in \dom(A)$.
We call an operator $A$ \emph{invertible}, if it is invertible as a function from $\dom(A)$ to $\ran(A)$, \emph{i.e.},  if it is injective. 
$GL(\Hil)$ denotes the set of all bounded operators on $\Hil$ with bounded inverse.

The paper is organized as follows. After a brief summary of the main results about discrete frames  (Section 2), we analyze in detail in Section 3 the properties of discrete semi-frames,
 upper and lower, in particular the Hilbert scales generated by a given upper semi-frame.  In Section 4 we consider two generalizations of frames, namely, fusion frames and Banach frames, 
and analyze how they can be extended to semi-frames (this section has, by necessity, a review character). In Section 5, we briefly summarize the results concerning general 
(`continuous') semi-frames,
 referring to \cite{jpa-bal} for a thorough analysis. In the final Section 6, we reconsider various notions of frame equivalence and their extension to semi-frames.

\section{\sc Discrete frames}
\label{sec: discreteframes}

In the discrete case, we are interested in expansions with norm convergence, thus all the expansions in this section and the next one should be understood as norm convergent.

Let $\Psi =(\psi_k, \,k \in \Gamma)$ be  a frame  for $\Hil$, where $\Gamma$ is some index set (usually $\NN$). 
To this frame $\Psi$, we associate the following three bounded operators:
\bei
\item The analysis operator $C: \Hil \rightarrow  \ell  ^2 $ given by  $Cf = \{\ip {\psi_k }{f }, k \in \Gamma\}$;  

\item The synthesis operator  $D : \ell  ^2   \rightarrow \Hil$  given by
 $Dc = \sum_{k\in \Gamma} c_k \psi_k,$ where $c= (c_k)$;
 
\item  The frame operator $S:\Hil\to\Hil$ given by   
 $Sf = \sum_{k\in \Gamma} \ip{\psi_k}{f}\, \psi_k$,  so that
$$
 \ip{f}{Sf}\ = \sum _{k\in \Gamma} \left| \ip {\psi_k}{f} \right|^2.
$$
\eni
Moreover,   we have $D=C^\ast, \; C=D^ \ast $, and  $S= C^ \ast C$, so that $S$ is self-adjoint and positive, with bounded, self-adjoint inverse $S^{-1}$.

 For the sake of completeness and comparison with the semi-frame case, it is worthwhile to quickly summarize the salient features of  frames.
We do it in the form of a theorem. Of course, all the statements below are
 well-known \cite{Casaz1,ole1,Groech1}, but the approach is non-standard and follows the continuous formalism developed in
 \cite{jpa-sqintegI,jpa-contframes,jpa-CSbook}.
For a proof of this theorem, as well as all the other results quoted in this section, we refer to \cite{jpa-bal}.
\begin{theorem} 
\label{theor-41} 

Let $\Psi=(\psi_k)$ be a frame in $\Hil$,
with analysis operator $C : \Hil \rightarrow \ell^2 $, synthesis operator $D: \ell  ^2   \rightarrow \Hil$ and frame operator is $S: \Hil \rightarrow \Hil $.
Then
\bei
\item[(1)] $\Psi  $ is total in $\Hil$. The operator $S$ has a bounded inverse $S^{-1}:\Hil\to\Hil$ and one has the reconstruction formulas 
\begin{align} \label{srepr}
f &=  S^{-1} S f = \sum _{k\in \Gamma} \ip{\psi_k}{ f}  S^{-1} \psi_k , \; \mbox{ for every }\; f\in \Hil, 
\\[1mm] \label{srepr2}
f &=   S S^{-1} f = \sum _{k\in \Gamma} \ip{S^{-1} \psi_k}{ f}   \psi_k , \; \mbox{ for every }\; f\in \Hil. 
\end{align}

\vspace*{-2mm} \item [(2)] $R_{C}$ is a closed subspace of $\ell^2$. The projection $P_\Psi$ from $\ell^2$ onto  $R_{C}$ is given by $P_\Psi = C S^{-1} D$
 $=C C^+$, where, as before, $C^+$ is the pseudo-inverse of $C$.   

\item [(3)] Define
\be
\ip {c}{d }_{\Psi} = \ip {c}{C S^{-1} C^{-1} d} _{\ell  ^2 }, \,  \, c, d\in R_C. 
\label{eq:discr-scalarprod}
\end{equation}
The relation (\ref{eq:discr-scalarprod}) defines an inner product on $R_C$ and $R_C$ is complete in this inner product. 
Thus, $(R_{C}, \<\cdot,\cdot\>_\Psi)$ is a Hilbert space, which will be denoted by $\h_\Psi $.

\item [(4)] $\h_\Psi $ is a reproducing kernel Hilbert space with kernel given by the matrix  ${\mathcal G}_{k,l} = \ip  { \psi_k}   {S^{-1}\psi_l }$. 

\item [(5)]  The analysis operator $C$ is a unitary operator from $\Hil$ onto  $\h_\Psi $.
Thus, it can be inverted on its range by its adjoint, which leads to the reconstruction formula (\ref{srepr}).  
\eni
\end{theorem}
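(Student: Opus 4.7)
The plan is to deduce all five statements from the single fact that the frame inequality \eqref{eq:discr-bddframe} rewrites as the operator inequality ${\sf m}\, I \le S \le {\sf M}\, I$, so that $S = C^*C$ belongs to $GL(\Hil)$ with a bounded, positive, self-adjoint inverse. For (1), invertibility of $S = C^*C$ forces ${\sf Ker}(C) = \{0\}$; since ${\sf Ker}(C)$ is precisely the annihilator of $\Psi$, this gives totality. The reconstruction identities \eqref{srepr} and \eqref{srepr2} then follow from $f = S^{-1}(Sf) = S(S^{-1}f)$ combined with the defining series of the frame operator and the continuity (resp.\ self-adjointness) of $S^{-1}$ used to commute the operators past the strongly convergent sums.

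For (2), the bound $\|Cf\|^2 = \ip{f}{Sf} \ge {\sf m}\,\|f\|^2$ shows that $C$ is bounded below, so $R_C$ is closed in $\ell^2$ and $C:\Hil\to R_C$ is a linear bijection with bounded inverse. The orthogonal projection onto $R_C$ is then given by the standard formula $P_\Psi = C(C^*C)^{-1}C^* = C S^{-1}D$; recognising $C^+ = S^{-1}C^* = S^{-1}D$ as the Moore--Penrose pseudo-inverse of $C$ turns this into $P_\Psi = CC^+$.

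For (3) and (5), the key computation is to substitute $c = Cf$, $d = Cg$ (legitimate since $C:\Hil\to R_C$ is a bijection by (2)) into the right-hand side of \eqref{eq:discr-scalarprod}; using $C^*C = S$ and the self-adjointness of $S^{-1}$, this collapses to $\ip{Cf}{Cg}_\Psi = \ip{f}{g}_\Hil$. From this identity one reads off in one stroke that $\ip{\cdot}{\cdot}_\Psi$ is a well-defined inner product on $R_C$, that $\h_\Psi = (R_C, \ip{\cdot}{\cdot}_\Psi)$ is complete (being isometric to $\Hil$), and that $C:\Hil\to\h_\Psi$ is unitary, which is (5); its inverse on the range, $S^{-1}C^*$, applied to $Cf$ gives back \eqref{srepr}.

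For (4), the reproducing identity follows from the reconstruction formula: for $c = Cf\in R_C$ one has $c_l = \ip{\psi_l}{f}$, and substituting $f = \sum_k \ip{\psi_k}{f}\,S^{-1}\psi_k$ produces $c_l = \sum_k \ip{\psi_l}{S^{-1}\psi_k}\,c_k$, exhibiting $\mathcal{G}$ as the matrix that acts as the identity on $R_C$ and, equivalently, as the matrix representation of the orthogonal projection $P_\Psi$ of $\ell^2$ onto $R_C$. The main obstacle I anticipate is purely interpretational: the symbol $C^{-1}$ appearing in \eqref{eq:discr-scalarprod} must be read as the inverse of the restricted bijection $C:\Hil\to R_C$, so that the bilinear form is well-defined on $R_C\times R_C$; once this is granted, the rest is a routine book-keeping exercise centred on the single identity $S = C^*C$.
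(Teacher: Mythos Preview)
Your proof is correct and complete. The paper itself does not prove Theorem~\ref{theor-41}; it states that ``all the statements below are well-known'' and refers to \cite{jpa-bal} for a proof, noting only that the approach there follows the continuous formalism of \cite{jpa-sqintegI,jpa-contframes,jpa-CSbook}. Your argument---organised around the single operator inequality ${\sf m}\,I \le S \le {\sf M}\,I$ and the consequence $S = C^*C \in GL(\Hil)$---is precisely the standard route and is fully in the spirit of that formalism: the key identity $\ip{Cf}{Cg}_\Psi = \ip{f}{g}_\Hil$ you derive for (3) and (5) is exactly the discrete counterpart of the unitarity of the coherent-state map in \cite{jpa-sqintegI}, and your identification of $\mathcal{G}$ with the matrix of the projection $P_\Psi = CS^{-1}D$ in (4) mirrors the continuous reproducing-kernel construction. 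Your closing remark about reading $C^{-1}$ as the inverse of the bijection $C:\Hil\to R_C$ is the correct interpretation and matches the paper's usage throughout.
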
 
{The sequence $(\widetilde \psi_k)$ where $\widetilde \psi_k:= S^{-1} \psi_k$, is again a frame, called the \emph{canonical dual} of $( \psi_k)$, 
because the   relation \eqref{srepr2} means precisely that it is  dual to $( \psi_k)$.  In general, however, a frame may have many different duals, and it has a unique dual
 if and only if  it is a Riesz basis \cite[Cor. 6.65]{casaz-han}. We will extend this notion of duality to semi-frames in Section \ref{subsec:simplexamples}.

\section{\sc Discrete semi-frames}
\label{sec:discsemiframes}

\subsection{\sc Discrete upper semi-frames}
\label{subsec:unbdd-discrframe}

Let now $\Psi$ be an \emph{upper semi-frame}, that is, a sequence $(\psi_k)$ satisfying the relation
\be
0 < \sum_{k\in \Gamma} \left| \ip {\psi_k}{f} \right|^2  \le {\rm\sf M}  \norm{}{f}^2 ,  \forall \, f \in \Hil, \, f\neq 0.
\label{eq:discr-unbddframe}
\end{equation}
{If only the upper bound inequality holds, $\Psi$ is called a \emph{Bessel sequence}.}
Thus an upper   semi-frame is nothing but a total Bessel sequence.  
By analogy with the frame case, we may also introduce a \emph{weighted} upper semi-frame, with weights $v_k\neq 0$,
defined by the obvious relation
$$
0 < \sum_{k\in \Gamma}v_k^2 \left| \ip {\psi_k}{f} \right|^2  \le {\rm\sf M}  \norm{}{f}^2 ,  \forall \, f \in \Hil, \, f\neq 0.
$$
Since all the statements apply to this case also, we will not mention it any more in the sequel.

We begin by analyzing the three operators $C, D$ and $S$, defined exactly as above for a frame  (see \cite{jpa-bal} for a proof). 

\begin{lem.}\label{lem31}
 Let   $\Psi$ be an upper semi-frame. Then one has:
\bei
\item[(1)] The analysis operator $C$ is injective and bounded.   The synthesis operator $D=C\ha$ is   bounded   with dense range.
The frame operator $S$ is   bounded, self-adjoint, positive   with dense range. 
Its inverse $S^{-1}$ is  densely defined and self-adjoint.

\item[(2)] $R_C^{\Psi} \subseteq R_C \subseteq \ov{R_C}$, with dense inclusions, where $R_C^{\Psi}:= C(R_{S})$ and
$\ov{R_C}$ denotes the closure of $R_C$ in $ \ell  ^2  $.
\eni
\end{lem.}
\medskip

At this point, we   make a distinction, that will simplify some statements below.  Namely, we say that the upper semi-frame $\Psi = (\psi_k)$ is \emph{regular}
 if every $\psi_{k}$ belongs to $\dom(S^{-1})=R_S$. First note that, if $\Psi$ is an upper semi-frame for $\Hil$, then 
 \begin{equation} \label{ssminus1}
 f=SS^{-1}f= \sum _{k\in \Gamma} \ip{\psi_k}{S^{-1}f}\psi_k, \ \forall f\in R_S.
 \end{equation}
If we want to write the expansion above using a dual sequence (similar to the frame expansion \eqref{srepr2}), then the upper semi-frame should be regular. 
Indeed, since $S$ is bounded and $S^{-1}$ is self-adjoint, we have
\begin{prop.} \label{ssminus} 
Let $\Psi$ be a regular upper semi-frame for $\Hil$. Then
\begin{equation}\label{eq:trivreconstunb1}
f=SS^{-1}f=\sum _{k\in \Gamma} \ip{S^{-1}\psi_k}{f}\psi_k, \ \forall f\in R_S.
\end{equation}
\end{prop.} 
\medskip

However, it is not  possible to extend this reconstruction formula to all $f\in\Hil$ by a limiting procedure.  
If the reconstruction formula can be extended in the strong sense to the whole Hilbert space, then the original sequence was already a frame.  
A detailed argument to that effect  is given in \cite[Remark 3.4.1]{jpa-bal}. For a reconstruction formula   in the weak  sense, see Section \ref{subsec:discG-triplet}.

\medskip

The whole motivation of the present construction is to translate abstract statements in $\Hil$ into concrete ones about sequences,
 taking place in $\ell^2$. The correspondence is implemented by the operators $C$ and $C^{-1}$. Hence we first transport
the operators $S$ and $S^{-1}$, according to the following proposition.

\begin{prop.} \label{prop44}
\bei\item[(1)] Define the operator $G_{S} : R_C \to C(R_S) $   by  $G_{S} = C   S Ê  C^{-1}$  
and the operator $G_{S} ^{-1} :  C(R_S) \to R_C$   by  $G_{S} ^{-1} = C   S^{-1} Ê  C^{-1}{|_{C(R_S)}}.$ Then, in the Hilbert space $ \ov{R_C}$,
$G_{S} $ is a bounded, positive and symmetric operator, while $G_{S} ^{-1}$ is positive and essentially self-adjoint.
These two operators are bijective and inverse to each other.

\item[(2)]  Let $G = \ov{G_{S}}$ and let $G^{-1}$ be the self-adjoint extension of  $G_{S} ^{-1}$.
Then $G: \ov{R_C} \to R_{G} \subseteq \ov{R_C}$ is 
bounded, self-adjoint and positive  with  $\dom(G) = \ov{R_C}$, thus $G = C D{|_{ \ov{R_C}}}$.
Furthermore $G^{-1}: \dom(G^{-1}) \subset \ov{R_C} \to \ov{R_C}$ is self-adjoint and positive, with domain 
$ \dom(G^{-1}) = R_{G}=   C (R_{D})  $, a dense subspace of $ \ov{R_C}$. The two operators are inverse of each other, in the sense that
$$
 G^{-1}G = I{|_{\ov{R_C}}}, \quad G G^{-1}=  I{|_{C(R_D)}}.
$$
\eni
\end{prop.}
A proof of Proposition \ref{prop44}  may be found in \cite[Prop.3.4]{jpa-bal},  partly  following   the similar one in the original paper \cite{jpa-sqintegI}, 
which concerned the case of generalized frames. See also Section \ref{sec:contframes}.
\medskip

Putting everything together, we have  the following commutative diagram: 
\be\label{diagram3}
\hspace*{-1cm}\begin{minipage}{15cm}
\begin{center}
\setlength{\unitlength}{2pt}
\begin{picture}(100,60)
\put(10,50){$\Hil$}
\put(89,50){$R_C \subseteq  \ov{R_C}\subseteq \ell^2 $}
\put(20,52){\vector(4,0){65}}
\put(54,44){$C^{-1}$}
\put(85,50){\vector(-4,0){65}}
\put(55,54){$C$}
\put(10,47){\vector(0,-4){30}}
Ê\put(12,17){\vector(0,4){30}}
\put(91,47){\vector(0,-4){30}}
Ê\put(93,17){\vector(0,4){30}}
\put(-35,10){$ \Hil \supseteq \dom(S^{-1})=R_S$}
\put(85,47){\vector(-2,-1){65}}
\put(14,30){$S^{-1}$}
\put(5,30){$S$}
\put(82,30){$G_{S}$}
\put(95,30){$G_{S} ^{-1}$}
\put(86,10){$ C(R_S)\subseteq \ell^2$}
\put(19,12){\vector(4,0){65}}
\put(50,4){$C^{-1}$}
\put(48,33){$D$}
Ê\put(84,10){\vector(-4,0){65}}
\put(50,15){$C$}
\end{picture}
\end{center}
\end{minipage}
\end{equation}

As before, define on $C(R_S)$ the new inner product $\ip {c}{d}_ \Psi = \ip{c}{G^{-1} d}_{\ell^2 }$, which makes sense
since $G^{-1}$ is self-adjoint and positive. Therefore $\ip {c}{d}_ \Psi = \ip{G^{-1/2}c}{G^{-1/2} d}_{\ell^2 }$.
Denote by ${\h}_{\Psi}$ 
the closure of $C(R_S)$ in the corresponding  norm $\norm{\Psi}{\cdot}$, which is a Hilbert space.

Then the fundamental result reads as follows  (a proof is given in \cite[Theorem 3.6]{jpa-bal}).
\begin{theorem}
\label{theor-43} Let ${\h}_{\Psi}$ be the closure of $C(R_S)$ in the new norm $\norm{\Psi}{\cdot}$. 
Then: 
\bei
\item[(1)] ${\h}_{\Psi}$ coincides with $R_C$ (as sets) and $C$ is a unitary map  (isomorphism) between $\Hil$ and ${\h}_{\Psi}$.

\item[(2)] 
The norm $\norm{\Psi}{.}$ is equivalent to the graph norm of $G^{-1/2}$ and, therefore, $\dom( G^{-1/2} ) = {\h}_{\Psi}$. 

\item[(3)]
  $C^{*(\Psi)} = \left( S^{-1} D \right)\!{|_{{\h}_{\Psi}}}$, where $C^{*(\Psi)}:{\h}_{\Psi} \to \Hil$ is the adjoint of
$C: \Hil  \to{\h}_{\Psi} $. Moreover,  for every $f\in\Hil$, one has
$$
 f = C^{*(\Psi)} C f = \left( S^{-1} D \right) C f. 
$$
\item[(4)] 
${G^{1/2}}: \ov{R_C} \rightarrow {\h}_{\Psi}$ is a unitary operator  and so is its inverse  ${G^{-1/2}} : Ê{\h}_{\Psi} \rightarrow \ov{R_C}$. 
\eni
\end{theorem}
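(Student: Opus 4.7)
The strategy is to prove a sharp isometric identity on the dense subspace $R_S\subseteq\Hil$ and then propagate it to each of the four claims. For $f=Sg\in R_S$ one has $Cf\in C(R_S)=\dom(G^{-1})$, and using the intertwining $G^{-1}=CS^{-1}C^{-1}$ from Proposition \ref{prop44} together with $C\ha C=S$,
\[
\norm{\Psi}{Cf}^2=\<Cf,G^{-1}Cf\>_{\ell^2}=\<Cf,CS^{-1}f\>_{\ell^2}=\<Cf,Cg\>_{\ell^2}=\<Sf,g\>_{\Hil}=\<f,f\>_{\Hil}=\norm{\Hil}{f}^2.
\]
Thus $C$ is a surjective isometry from $(R_S,\norm{\Hil}{\cdot})$ onto $(C(R_S),\norm{\Psi}{\cdot})$. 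Since $R_S$ is dense in $\Hil$ by Lemma \ref{lem31} and $C(R_S)$ is dense in $\h_\Psi$ by definition, $C$ extends uniquely to a unitary map $\Hil\to\h_\Psi$; and because $\norm{\ell^2}{c}\le\|G^{1/2}\|\,\norm{\Psi}{c}$ on $C(R_S)$, $\Psi$-limits agree with $\ell^2$-limits, so the extension coincides with the original $C:\Hil\to\ell^2$. Surjectivity forces $R_C=C(\Hil)=\h_\Psi$ as sets, which is (1).

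For (2), the identity $\norm{\Psi}{c}^2=\norm{\ell^2}{G^{-1/2}c}^2$ on $C(R_S)$ together with the same domination shows that $\norm{\Psi}{\cdot}$ is equivalent on $C(R_S)$ to the graph norm of $G^{-1/2}$, so the two completions coincide. Since $G^{-1/2}$ is self-adjoint and hence closed, the completion in graph norm embeds into $\dom(G^{-1/2})$, giving $\h_\Psi\subseteq\dom(G^{-1/2})$ at once. For the reverse inclusion I would verify that $C(R_S)$ is a core for $G^{-1/2}$: $\dom(G^{-1})=C(R_D)$ is a core by standard spectral theory, and for $w=Gc\in C(R_D)$ with $c\in\ov{R_C}$ one picks $c_n\in R_C$ with $c_n\to c$ in $\ell^2$; then $v_n:=Gc_n\in G(R_C)=C(R_S)$ satisfies $v_n\to w$ and $G^{-1/2}v_n=G^{1/2}c_n\to G^{1/2}c=G^{-1/2}w$ by boundedness of $G^{1/2}$, which is convergence in the graph norm of $G^{-1/2}$. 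Combined with $\h_\Psi=R_C$ this yields $\dom(G^{-1/2})=R_C=\h_\Psi$.

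Part (3) is now a consequence of unitarity. Repeating the isometry manipulation for a general $c=Ch\in C(R_S)$ with $h=S\tilde h$ gives $\<Cf,c\>_\Psi=\<f,h\>_{\Hil}$ for all $f\in\Hil$, and since $h=S^{-1}Dc$ (because $Dc=DCh=Sh$), one obtains $C^{\ast(\Psi)}c=S^{-1}Dc$ on $C(R_S)$; density of $C(R_S)$ in $\h_\Psi$ together with the definition of $C^{\ast(\Psi)}$ extends this identity to all of $\h_\Psi$, and applying it to $c=Cf$ gives $f=C^{\ast(\Psi)}Cf=S^{-1}Sf$. For (4), the identity $\norm{\Psi}{G^{1/2}c}=\norm{\ell^2}{G^{-1/2}G^{1/2}c}=\norm{\ell^2}{c}$ for $c\in\ov{R_C}$ shows $G^{1/2}:\ov{R_C}\to\h_\Psi$ is isometric, and its range $R_{G^{1/2}}=\dom(G^{-1/2})=\h_\Psi$ by (2); so $G^{1/2}$ is unitary and its inverse $G^{-1/2}$ is unitary as well.

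The step I expect to be the main obstacle is the reverse inclusion in (2), namely showing that $C(R_S)$ is actually a core for $G^{-1/2}$ and not merely a subset of $\dom(G^{-1/2})$. All the other parts collapse to the single isometry identity on $R_S$ plus routine bookkeeping across the three norms $\norm{\Hil}{\cdot}$, $\norm{\ell^2}{\cdot}$, $\norm{\Psi}{\cdot}$ and their corresponding completions, whereas the core property requires carefully exploiting that $G$ is bounded on $\ov{R_C}$ while $G^{-1/2}$ is not, in order to transfer approximations from $\dom(G^{-1})$ down to $C(R_S)$.
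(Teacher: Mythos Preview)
Your argument is correct and is the natural approach. The paper itself does not give a proof of this theorem; it simply refers the reader to \cite[Theorem 3.6]{jpa-bal}. So there is no in-paper proof to compare against. That said, the strategy you use---establishing the isometry identity $\norm{\Psi}{Cf}=\norm{\Hil}{f}$ on $R_S$ via the intertwining $G^{-1}Cf=CS^{-1}f$ and then extending by density---is exactly the standard route and is essentially what is carried out in the cited reference.

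Two minor remarks. First, in part (3) you could bypass the density argument entirely: once you know from (1) that $\h_\Psi=R_C$ and $C$ is unitary, every $c\in\h_\Psi$ is of the form $c=Cf$ for a unique $f\in\Hil$, and then $Dc=DCf=Sf$ gives $S^{-1}Dc=f=C^{-1}c=C^{\ast(\Psi)}c$ directly, with no limiting procedure needed. Second, your concern about the core property in (2) is well placed, but your verification is clean: the boundedness of $G$ and $G^{1/2}$ is precisely what lets you push approximants from $R_C$ into $C(R_S)$ while controlling the graph norm of $G^{-1/2}$, and the fact that $\dom(G^{-1})$ is a core for $G^{-1/2}$ is standard spectral theory. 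No gap there.
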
 

\vspace*{5mm}
Thus we have the following diagram that extends \eqref{diagram3}:
\be\label{diagram4}
\begin{array}{cccc}
\Hil       &\stackrel{C}{\longrightarrow }&{\h}_{\Psi}= R_{C} \subset & \!\!\!\!\ov{R_{C}} \subset \ell^{2} \\[1mm]
\cup&                            &\cup& \\[1mm]
\dom(S^{-1}) = R_{S}   &\stackrel{C}{\longrightarrow }  &\qquad C(R_S)\quad \subset & \hspace{-3mm}  \ell^{2}
\end{array}
\end{equation}

As expected, the regularity of an upper semi-frame allows us to derive results analogous to those obtained for a frame in Theorem \ref{theor-41}, namely,
\begin{prop.}\label{sec:rkhsunbound1} 
Let $(\psi_k)$ be a regular upper semi-frame.
Then ${\h}_{\Psi}$ is a reproducing kernel Hilbert space, with kernel given by  the operator $S^{-1}D $, which is a matrix operator,
 namely,  the matrix $\mathcal G$, where  $\mathcal G_{k,l} = \ip{Ê\psi_k}   {S^{-1} \psi_l}   = \ip {\psi_k}  { C^{-1} G^{-1} C Ê\psi_l}  $. 

\end{prop.}

{For $f \in R_S$ we have $f = S S^{-1} f$. So, for a regular upper semi-frame, we can give the reconstruction formula  \eqref{ssminus1} for all $f \in R_S$, 
which reads as:
$$ f = \sum _{k\in \Gamma} \ip {\widetilde \psi_k}{f} \psi_k,
 $$ 
where, as usual, $\widetilde \psi_k:= S^{-1} \psi_k$ denotes the canonical dual.
Other  reconstruction formulas  may be given  for every $f\in R_D$ and even    
 for all $f \in \Hil$, even in the case  of a nonregular upper semi-frame,
 if we allow the analysis coefficents to be altered. However, the resulting formulas are not very useful
 since they use an operator-based approach and don't use sequences for the inversion. Hence we skip this, referring instead to
 \cite[Theorem 3.6(4) and Theorem 3.8]{jpa-bal}.}

For a treatment of the existence of dual sequences  and related questions, we refer to \cite{jpa-Besseq}.

\subsection{\sc Formulation in terms of a Gel'fand triplet}
\label{subsec:discG-triplet}

If the upper semi-frame $\Psi$ is not regular, we  have to turn to distributions, using for instance the language of   Rigged Hilbert spaces or Gel'fand triplets  \cite{gelf}, 
as we show now.  Actually, we get here a simpler version, namely a triplet of Hilbert spaces, the simplest form of (nontrivial) partial inner product space  \cite{jpa-pipspaces}.

The construction  goes back to  \cite[Section 4]{jpa-contframes} and \cite[Section 7.3]{jpa-CSbook}, in the general case (see also \cite{jpa-bal}).
 When particularized to the discrete environment, the argument goes as follows. 
If  $\Psi$ is regular, the reproducing matrix $\mathcal G$, introduced in Proposition \ref{sec:rkhsunbound1},  defines a bounded sesquilinear form over ${\h}_{\Psi}$, namely,
\be\label{eq:discrsesqform1}
K^\Psi(d,d') := \sum _{k,l\in \Gamma} \ov{d_{k}}\,\mathcal G_{k,l}\,d'_{l} = \ip{C^{-1} d}{SC^{-1}d'}_\Hil, \; \forall\, d,d'\in {\h}_{\Psi}.
\end{equation}
However, the resulting relation still makes sense  even  if  $\Psi$ is not regular, that is, 
\be\label{eq:discrsesqform2}
K^\Psi(d,d') = \ip{C^{-1} d}{SC^{-1} d'}_\Hil, \; \forall\, d,d'\in {\h}_{\Psi}.
\end{equation}
Denote by ${\h}_{\Psi}^{\times}$ the Hilbert space obtained by completing  ${\h}_{\Psi}$ in the norm given by this sesquilinear form.
Now, \eqref{eq:discrsesqform1} and \eqref{eq:discrsesqform2}  imply that 
$$ 
 K^\Psi(d,d') = \ip{C^{-1} d}{SC^{-1} d'}_\Hil = \ip{d}{C SC^{-1} d'}_\Psi = \ip{d}{d'}_{\ell^2} .
$$
Therefore, one obtains, with continuous and dense range embeddings,
\be
 {\h}_{\Psi} \;\subset\; {\h}_{0}= \ov{{\h}_{\Psi}}\;\subset \; {\h}_{\Psi}^{\times},
\label{eq:disctriplet}
\end{equation}
where 
\begin{itemize}
\vspace{-1mm}\item[{\bf .}]  $ {\h}_{\Psi} = R_{C}$, which is a Hilbert space for the norm 
$\norm{\Psi}{\cdot}=\ip {\cdot}{ G^{-1}  \cdot }^{1/2}_{\ell^2}$; 
\vspace{-1mm}\item[{\bf .}]   ${\h}_{0} =\ov{{\h}_{\Psi}} = \ov{R_{C}}$ is the closure of ${\h}_{\Psi}$ in $\ell^2$;
\vspace{-1mm}\item[{\bf .}]  
${\h}_{\Psi}^{\times}$ is 
the completion of  ${\h}_{0}$  (or ${\h}_{\Psi}$) in the norm $\norm{\Psi^{\times}}{\cdot}:=\ip {\cdot}{G \cdot }^{1/2}_{\ell^2}$.
\end{itemize}

 The notation in \eqref{eq:disctriplet} is coherent, since  the space $\h^{\times}_{\Psi}$ just constructed is the conjugate dual of ${\h}_{\Psi}$, 
 \emph{i.e.},  the space of conjugate linear functionals on   ${\h}_{\Psi}$ 
(we use the conjugate dual instead of the dual, in order that all embeddings in \eqref{eq:disctriplet} be linear). 
Indeed, since $K^\Psi$ is a bounded sesquilinear form over  ${{\h}_{\Psi}}$, the relation $ X_d= \ov {K^\Psi(d,\cdot)}$
 defines, for each $d\in{\h}_{\Psi}$,  an element $X_d$ of the conjugate dual of  ${\h}_{\Psi}$ (note that  $X_d$ depends \emph{linearly} on $d$). 
If, on these elements, we define the inner product
 $$
 \ip{X_d}{X_{d'}}_{{\Psi}^{\times}} = \ip{C^{-1} d}{SC^{-1} d'}_\Hil = K^\Psi(d,d')
 $$
 and take the completion, we obtain precisely the Hilbert space ${\h}_{\Psi}^{\times}$. Thus \eqref{eq:disctriplet} is a Rigged Hilbert space or a Gel'fand triplet, 
 more precisely a Banach (or Hilbert) Gel'fand triple in the terminology of Feichtinger \cite{fei-zimm}.
 
The sesquilinear form $K^\Psi$ gives some way of inverting the analysis operator, as follows.
Given $f\in \Hil, \, d=Cf \in {\h}_{\Psi}$, consider the relation  
$$
 \ip{f}{f'}_{\Hil}=\ip{d}{CS^{-1}f'}_{\ell^2}, \; f\in \Hil, \, f'\in R_{S}.
$$
Define $d':=CS^{-1}f' \in {\h}_{\Psi}$. 
Even if $\Psi$ is not regular, we can associate to  $d'$   an element  $X_{d'}\in {\h}_{\Psi}^{\times}$, namely,
$$
X_{d'}(d) = K^\Psi(d,d') = \ip{d}{d'}_{\ell^2} = \ip{f}{f'}_{\Hil}.
$$
However, this procedure does not give explicit reconstruction formulas.

In the triplet \eqref{eq:disctriplet}, ${\h}_{0}$ is a sequence space contained in $\ell^2$ (possibly $\ell^2$ itself), ${\h}_{\Psi}$ is a smaller sequence space, for instance a space
of \emph{decreasing} sequences. Then  ${\h}_{\Psi}^{\times}$ is the K\"{o}the dual of ${\h}_{\Psi}$, normally not contained in $\ell^2$ \cite{lux}. 
In the example below (Section \ref{subsec:simplexamples}), ${\h}_{\Psi}^{\times}$ consists of slowly \emph{increasing} sequences.

Now, if $\Psi$ is regular, all three spaces $ {\h}_{\Psi},{\h}_{0}, {\h}_{\Psi}^{\times}$ are  {reproducing kernel Hilbert spaces}, with the same 
(matrix) kernel  ${\mathcal G}(k,l) = \ip{\psi_{k} }{S^{-1}\psi_{l}}.$ 

Finally, if $S^{-1}$ is bounded, that is, in the case of a frame, the three Hilbert spaces of \eqref{eq:disctriplet} coincide as sets, with equivalent norms,
 since then both $S$ and $S^{-1}$  belong to $GL(\Hil)$. 

A possibility to have a more general reconstruction formula is the following. In the triplet \eqref{eq:disctriplet}, the `small' space ${\h}_{\Psi}$ is the  domain of 
$ G^{-1/2}$, with norm $\norm{\Psi}{\cdot}=\ip {\cdot}{ G^{-1}  \cdot }^{1/2}_{\ell^2}$.  On the side of $\Hil$, \emph{i.e.},  on the left-hand side of the diagram \eqref{diagram4},
this corresponds to the   domain of $S^{-1/2} $, with norm ${\norm{\Psi}{\cdot}}\!\!\!^{\widetilde \;\;} =\ip {\cdot}{S^{-1}  \cdot }^{1/2}_{\Hil}$. 
We can  consider instead the smaller space 
  $R_S = \dom(S^{-1}) $, with norm $\norm{\s}{\cdot}=\ip {S^{-1} \cdot}{S^{-1}  \cdot }^{1/2}_{\Hil}$. The resulting space, denoted $\s$, is complete, hence a Hilbert space.
Thus, adding the conjugate dual $ \s^{\times}$ of $ \s$,  we get a new triplet
\be\label{eq:newtriplet}
 \s \subset \Hil \subset  \s^{\times}.
\end{equation}
In the new triplet \eqref{eq:newtriplet}, the operator $S^{-1}$ is isometric from
$ \s$ onto $\Hil $ and, by duality, from $\Hil $ onto $ \s^{\times}$ .
The benefit of that construction is that the relation \eqref{ssminus} is now valid for any  $f\in \Hil$,
 even if $\Psi$ is non-regular, but of course, we still need 
that $f'\in \s = R_{S}$. In other words, we obtain  a reconstruction formula in the sense of distributions, namely,
\be\label{ssminus2}
\ip{f}{f'}= \sum _{k\in \Gamma} \ip{f}{\psi_k} \ip{S^{-1}\psi_k}{f'}\psi_k, \ \forall   f\in\Hil, \,\forall f' \in \s=R_S.
\end{equation}
Mapping everything into $\ell^2$ by $C$, we obtain the following scale of  Hilbert spaces:
\be\label{eq:multiplet}
\h_{\s} \subset \h_{\Psi} \subset \h_{0} = \ov{R_C} \subset {\h}_{\Psi}^{\times} \subset {\h}_{\s}^{\times}.
\end{equation}
In this relation,   $\h_{\s}:= C  \s=  C(R_{S})$, with norm\footnote{The expression for the norm of $\h_{\s}$ given  in \cite{jpa-bal},  after Eq.(2.16),
 is not correct.} 
 $\norm{\s}{\cdot}=\ip {\cdot}{ G^{-3}  \cdot }^{1/2}_{\ell^2}$.

\subsubsection{\sc A scale of Hilbert spaces}

Combining these results with the diagram (3.15) of \cite{jpa-bal} and extending the latter, we obtain the following scheme (note that the previous diagram 
has been inverted, both horizontally and vertically):
\bigskip

\be\label{gramframconn1}
 \vspace*{1cm}
\hspace*{-2cm}\begin{minipage}{15cm}
\begin{center}
\setlength{\unitlength}{1.9pt}
\begin{picture}(100,60)
\put(-25,50){$\cdots$} 

\put(-15,50){$\stackrel{S^{-1/2}}{\longrightarrow }$}
\put(7,63){$\H_{2}  $}   
\put(10,57){$\shortparallel  $}   
\put(0,50){$R_S= \s$} 

\put(23,50){$\stackrel{S^{-1/2}}{\longrightarrow }$}
\put(40,63){$\H_{1}  $}   
\put(42,57){$\shortparallel  $}   
\put(40,50){$R_D$} 

\put(55,50){$\stackrel{S^{-1/2}}{\longrightarrow }$}
\put(72,63){$\color{red}\H_{0}  $}   
\put(74,57){$\color{red}\shortparallel  $}   
\put(72,50) {$\color{red}\Hil $} 

\put(86,50){$\stackrel{S^{-1/2}}{\longrightarrow }$}
\put(105,63){$\H_{-1}  $}   
\put(109,57){$\shortparallel  $}   
\put(105,50){$R_{S^{-1/2}}$}  

\put(125,50){$\stackrel{S^{-1/2}}{\longrightarrow }$}
\put(150,63){$\H_{-2}  $}   
\put(153,57){$\shortparallel  $}   
\put(140,50){$R_{S^{-1}}= \s^{\times}$} 

\put(173,50){$\stackrel{S^{-1/2}}{\longrightarrow }$}
\put(188,50){$\cdots$} 

\put(-50,10){$\stackrel{G^{-1/2}}{\longrightarrow }$}
\put(-60,10){$\cdots$}

\put(-35,10){$C(R_S)  $}
\put(-35,3){$\quad \shortparallel  $}   
\put(-39,-5){$  {\h}_{\s} \equiv {\h}_{3}  $}   

\put(-15,10){$\stackrel{G^{-1/2}}{\longrightarrow }$}
\put(0,10){$ C(R_D)$}  
\put(5,3){$ \shortparallel  $}   
\put(4,-4){$  {\h}_{2}  $}   

\put(22,10){$\stackrel{G^{-1/2}}{\longrightarrow }$}
\put(40,10){$ R_C $}
\put(42,3){$\shortparallel  $}   
\put(34,-4){$  {\h}_{\Psi}\equiv {\h}_{1}  $}   

\put(52,10){$\stackrel{G^{-1/2}}{\longrightarrow }$}
\put(70,10){$\color{red}\ov{R_C}  $}   
\put(72,3){$\color{red} \shortparallel  $}   
\put(71,-4){$\color{red}  {\h}_{0} $}   

\put(87,10){$\stackrel{G^{-1/2}}{\longrightarrow }$}
\put(105,10){$C(\s^{\times})$}

\put(109,3){$  \shortparallel  $}   
\put(100,-4){$ {\h}_{\Psi}^{\times}\equiv {\h}_{-1} $}   
\put(125,10){$\stackrel{G^{-1/2}}{\longrightarrow }$}
\put(140,10){$\cdots$}
\put(41,18){\vector(-1,1){28}} 
\put(74,18){\vector(-1,1){28}} 
\put(105,18){\vector(-1,1){28}} 
\put(10,46){\vector(-1,-1){28}} 
\put(41,46){\vector(-1,-1){28}} 
\put(73,46){\vector(-1,-1){28}} 
\put(105,46){\vector(-1,-1){28}} 
\put(140,46){\vector(-1,-1){28}} 
\put(15,33){$\scriptstyle D$}
\put(49,33){$\scriptstyle D$}
\put(82,33){$\scriptstyle D$}
\put(1,33){$\scriptstyle C$}
\put(31,33){$\scriptstyle C$}
\put(66,33){$\scriptstyle C$}
\put(97,33){$\scriptstyle C$}
\put(131,33){$\scriptstyle C$}
\end{picture}
\end{center}
\end{minipage}
\vspace*{-5mm}
\end{equation}

{In the upper row of \eqref{gramframconn1}, the operator $ S^{-1/2}$ is unitary from each space onto the next one.
The same is true for the operator $ G^{-1/2}$ in  the lower row.}

{Actually one can go further. Indeed, in the multiplet \eqref{eq:multiplet}, the space $\h_{\Psi}$ is the domain of $ G^{-1/2}$,  
  and $\h_{\s}$ is the domain of $ G^{-3/2}$, both considered with their graph norm in the topology of $\ov{R_C} \subset \ell^2$. 
Thus the multiplet \eqref{eq:multiplet} is the central part  of the Hilbert scale built on the powers of 
the positive self-adjoint operator $ G^{-1/2}$, namely,  $\h_{n}:= \dom(G^{-n/2}), nÊ\in \ZN : $
$$ 
 {\h}_{\Psi}\equiv \h_{1},  \quad C(R_D) \equiv \h_{2}, \quad {\h}_{\s} \equiv \h_{3}, \quad {\h}_{\Psi^\times} \equiv \h_{-1}, \quad  \ldots.
$$
Similar considerations apply to the triplet \eqref{eq:newtriplet}, 
which is the central part of the scale built on the powers of $S^{-1/2} $, \emph{i.e.},  $\H_{n}:= \dom(S^{-n/2}), nÊ\in \ZN$.
For better visualization, we have highlighted the central space of both scales.} 
 
In both cases, we obtain in this way a simple partial inner product space \cite{jpa-pipspaces}.  A natural question then is to identify the end spaces, 
\be \label{endscale}
\h_{\infty}(G^{-1/2}):=\bigcap_{n\in \ZN} \h_n, \qquad \h_{-\infty}(G^{-1/2}):=\bigcup_{n\in \ZN} \h_{n},
\end{equation} 
and similarly for the scale $\H_{n}:= \dom(S^{-n/2}), nÊ\in \ZN$.  
In the simple examples of Section \ref{subsec:simplexamples}, the question can be answered explicitly. In this way, one has at one's disposal 
the full machinery of partial inner product spaces \cite{jpa-pipspaces}.
For instance, one can ask under which conditions the space $\h_{\infty}$ is nuclear, and similar questions whose answer relies on the structure of the full scale.

In the diagram \eqref{gramframconn1}, on the left side of the central spaces, the operators $C$ and $D$ are defined as usual, and to the right they are defined by duality. 
Clearly, using the above notation we have $D : \h_n \rightarrow  \HH_{n+1}$,  $C : \HH_n \rightarrow \h_{n+1}$ and $S:  \HH_n \to  \HH_{n+2}$. 
Furthermore,  $S^a$ being unitary for every $a\in\RN$,  one has $(S^a)^* = S^{-a}$. 

But we can say   more,  since dual spaces of sequence spaces are sequence spaces again, with the duality relation $\ip{c}{d} = \sum_k \ov{c_k } {d_k}$, 
inherited from the space $\omega$ of \emph{all} sequences (see \cite[Secs. 1.1.3 and 4.3]{jpa-pipspaces}  or \cite[\S 30] {kothe}).
  \begin{theorem}
 \label{theo-scale}
Let $\psi_k \in \HH_{n_0}$ for a $n_0 \ge 0$. Then for all $n \le n_0$ we have: 
\bei
\vspace*{-2mm} \item[(1)]
$C : \HH_{-n} \rightarrow \h_{-(n-1)}$ is given by 
 $$
 C f =  \{\ip {\psi_k}{f}_{ \HH_{n}, \HH_{-n}}, k\in \Gamma\}. 
$$ 
\vspace*{-5mm} \item[(2)] 
 $D : \h_{-n}\rightarrow \HH_{-(n-1)} $ is given by 
$$
 D c = \sum_k c_k \psi_k,
 $$
\vspace*{-3mm}with weak convergence.
\item[(3)]  Let $m \le n_0-2$. Then, for all $f \in \HH_{-m}$, we have the reconstruction formula (in a weak sense):
$$
f = \sum _k\ip {S \psi_k}{f}_{\HH_{m},\HH_{-m}} \psi_k, \quad  \psi_k\in \HH_{m+2}.
$$
\eni
 \end{theorem}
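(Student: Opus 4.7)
The plan is to build $C$ and $D$ on the negative side of the two scales by taking conjugate transposes with respect to the Hilbert-scale duality pairings $\ip{\cdot}{\cdot}_{\HH_n,\HH_{-n}}$ and $\ip{\cdot}{\cdot}_{\h_n,\h_{-n}}$ (which extend the inner products of $\Hil$ and of $\ov{R_C}$ respectively), and then to read off the reconstruction formula in (3) as a consequence of the scale-level identity $DC=S$ applied to $S^{-1}f$. The positive-side boundedness statements $D:\h_{n-1}\to\HH_n$ and $C:\HH_{n-1}\to\h_n$, together with the hypothesis $\psi_k\in\HH_{n_0}$, make every pairing below well defined.

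For part (1), I define $\widetilde{C}:\HH_{-n}\to\h_{-(n-1)}$ as the transpose of $D:\h_{n-1}\to\HH_n$,
$$
\ip{\widetilde{C}f}{c}_{\h_{-(n-1)},\h_{n-1}}:=\ip{f}{Dc}_{\HH_{-n},\HH_{n}},\qquad f\in\HH_{-n},\ c\in\h_{n-1},
$$
so that $\widetilde{C}$ is automatically bounded. Since $\Hil$ is dense in $\HH_{-n}$ (the latter being its completion in a weaker norm) and $\widetilde{C}$ agrees with the original $C$ on $\Hil$, for which $(Cf)_k=\ip{\psi_k}{f}_\Hil$, the formula for the coordinates follows by continuity: the coordinate map $c\mapsto c_k$ is continuous on each scale space $\h_m$ (all of which embed continuously in $\omega$), while $\ip{\psi_k}{\cdot}_{\HH_n,\HH_{-n}}$ is continuous on $\HH_{-n}$ precisely because $\psi_k\in\HH_n$ under the hypothesis $n\le n_0$. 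Approximating $f$ by $f_j\in\Hil$ in $\HH_{-n}$ and passing to the limit yields $(\widetilde{C}f)_k=\ip{\psi_k}{f}_{\HH_n,\HH_{-n}}$, as claimed.

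Part (2) is strictly parallel: define $\widetilde{D}:\h_{-n}\to\HH_{-(n-1)}$ as the transpose of $C:\HH_{n-1}\to\h_n$,
$$
\ip{\widetilde{D}c}{g}_{\HH_{-(n-1)},\HH_{n-1}}:=\ip{c}{Cg}_{\h_{-n},\h_n},\qquad c\in\h_{-n},\ g\in\HH_{n-1}.
$$
Substituting the already-known positive-side formula $(Cg)_k=\ip{\psi_k}{g}_{\HH_{n-1},\HH_{-(n-1)}}$ into the right-hand side expresses the pairing as a convergent series whose partial sums are exactly $\ip{\sum_{k\le N}c_k\psi_k}{g}_{\HH_{-(n-1)},\HH_{n-1}}$; this is the statement that $\sum_k c_k\psi_k\to\widetilde{D}c$ weakly in $\HH_{-(n-1)}$.

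For (3), take $f\in\HH_{-m}$ with $m\le n_0-2$ and apply the extended unitary $S^{-1}:\HH_a\to\HH_{a-2}$ (inherited from the unitarity of $S^{-1/2}$ between consecutive spaces) to obtain $S^{-1}f\in\HH_{-(m+2)}$. Part (1) with $n=m+2\le n_0$ gives
$$
(CS^{-1}f)_k=\ip{\psi_k}{S^{-1}f}_{\HH_{m+2},\HH_{-(m+2)}}.
$$
The scale-level self-adjointness of $S$,
$$
\ip{\psi_k}{S^{-1}f}_{\HH_{m+2},\HH_{-(m+2)}}=\ip{S\psi_k}{f}_{\HH_m,\HH_{-m}},
$$
extends from the Hilbert-space identity $\ip{\psi_k}{S^{-1}f}_\Hil=\ip{S\psi_k}{f}_\Hil$ (valid for $f\in R_S$) by density and continuity, and requires $\psi_k\in\HH_{m+2}$, which is where the bound $m\le n_0-2$ is used. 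Applying part (2) and the identity $DC=S$, extended to the negative scale by the same density argument, yields $f=DCS^{-1}f=\sum_k\ip{S\psi_k}{f}_{\HH_m,\HH_{-m}}\psi_k$, with weak convergence in $\HH_{-m}$.

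The main obstacle is the consistent bookkeeping needed to verify that the extensions of $S^{-1}$, of the duality pairings, of the cross-level self-adjointness identity for $S$, and of the composition law $DC=S$ are all mutually compatible on the negative side of the scale. Each of these reduces to a density/continuity argument starting from the dense subspace $R_S$, but this careful compatibility check is really the substantive part of the proof.
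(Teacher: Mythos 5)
Your proposal is correct and follows essentially the same route as the paper: the paper likewise defines $C$ and $D$ on the negative side of the scales by transposing the positive-side operators through the duality pairings (computing $\ip{d}{Cf}_{\h_{n-1},\h_{-(n-1)}}=\ip{Dd}{f}_{\HH_n,\HH_{-n}}$ and its mirror), and obtains (3) from $f=DCS^{-1}f$ together with $(CS^{-1}f)_k=\ip{\psi_k}{S^{-1}f}_{\HH_{m+2},\HH_{-(m+2)}}=\ip{S\psi_k}{f}_{\HH_m,\HH_{-m}}$, using the unitarity of $S:\HH_m\to\HH_{m+2}$. Your version merely spells out the density/continuity bookkeeping that the paper leaves implicit.
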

\begin{proof}  Note that, for the chosen $n$, we have $\psi_k \in \HH_n$.
  For $d \in \h_{n-1} $ and $f \in \HH_{-n}$, we have 
$$ 
\ip  {d}{Cf}  _{\h_{n-1}, \h_{-(n-1)}} = 
  \ip{D d}{ f}  _{ \HH_{n}, \HH_{-n}} = \ip { \sum_{k} d_k \psi_k }{f}  _{ \HH_{n}, \HH_{-n}}
=\sum_{k} \ov{d_k}\, {\ip {\psi_k} {f}  }_{ \HH_{n}, \HH_{-n}} .  
 $$
On the other hand, let $c \in \h_{-n}$ and $g \in \HH_{n-1}$. Then, 
$$ 
\ip {D c} {g} _{\HH_{-(n-1)},\HH_{n-1}} = \ip {c}  { C g} _{\h_{-n},\h_{n}} = \sum _k \ov{c_k }\ip {\psi_k}{g} = \ip{\sum  _k c_k \psi_k}{ g}.
$$ 
Note that, for all $n$, one has
$S^{\pm 1} : \HH_{n} \rightarrow \HH_{n\pm 2}$ unitarily, so for $f \in \HH_{-m}$ we have $S^{-1} f \in \HH_{-(m+2)}$.
 Clearly $ f = S S^{-1} f = D C S^{-1} f$. 
 By assumption $m \le n_0-2$ and so we may take $\psi_k \in \HH_{m+2}$. Therefore 
$$
(C S^{-1} f)_{k} = \ip { \psi_k}{S^{-1} f}_{\HH_{m+2},\HH_{-(m+2)}} =\ip {S \psi_k}{f}_{\HH_{m},\HH_{-m}}, 
$$
 since $S: \HH_{m}\to \HH_{m+2}$ is unitary.

 \end{proof}
 \medskip
 
Clearly, a regular frame corresponds to $n_0=2$, that is,  $\psi_k \in \HH_{2}= R_S= \dom(S^{-1}) $. More generally, we say that $\Psi$ is
$n_0$-regular whenever $\psi_k \in \HH_{n_0}$, as in Theorem   \ref{theo-scale} (so that `regular' is `2-regular').
But now, Theorem \ref{theo-scale} suggests that we consider a smoother case, namely, that $\psi_k \in \HH_{\infty}(S^{-1/2}):=\bigcap_{n\in \ZN} \HH_n$. 
Then we   say that $\Psi$ is a \emph{totally regular} upper semi-frame. Clearly the three statements of Theorem \ref{theo-scale} hold now for every $n\in\ZN$, namely, 
\begin{prop.}
Let $\Psi = (\psi_k)$ be a  totally regular upper semi-frame. Then one has, for every $n\in\ZN$,
\bei
\vspace*{-2mm} \item[(1)]
$C : \HH_{-n} \rightarrow \h_{-(n-1)}$ is given by  $C f =  \{\ip {\psi_k}{f}_{ \HH_{n}, \HH_{-n}}, k\in \Gamma\}.$ 
\vspace*{-2mm} \item[(2)]
$D : \h_{-n}\rightarrow \HH_{-(n-1)} $ is given by  $D c = \sum_k c_k \psi_k, $ with weak convergence.
\vspace*{-2mm} \item[(3)]
For all $f \in \HH_{-n}$, we have the reconstruction formula (in a weak sense):
$$
f = \sum _k\ip {S \psi_k}{f}_{\HH_{n},\HH_{-n}} \psi_k, \quad  \psi_k\in \HH_{\infty}.$$
\eni
\end{prop.}
In addition,  the upper semi-frame $\Psi= (\psi_k)$  generates a whole set of  upper semi-frames in  the scale, and even more if it is $n_0$-regular or totally regular Indeed:
\begin{prop.} 
Let  $\Psi = (\psi_k)$ be an  upper semi-frame.  Then the following holds. 
\bei
\vspace*{-1mm} \item[(1)] 
The family $(S^{n/2}\psi_k)$ is an upper semi-frame in every space   $\HH_n, n\in \ZN$.
\vspace*{-1mm} \item[(2)]
Let $\Psi = (\psi_k)$ be    $n_0$-regular. Then, for every $n \leq n_0$, $\Psi$ is an upper semi-frame in the space $\HH_n$.
Similarly, if $\Psi$ is totally regular, the same is true for all $n\in \ZN$.
\eni
\end{prop.}
\begin{proof} (1) Since $S^{-n/2} : \HH_n \to \HH_0 = \H$ is a unitary map, we have, for any  $f\in\HH_n$,
$$
\sum_k |\ip{S^{n/2}\psi_k}{f}_{\HH_n}|^2 \leq \sum_k |\ip{\psi_k}{S^{-n/2}f}_{\HH_0}|^2 \leq B \norm{\HH_0}{S^{-n/2}f}^2 
=B  \norm{\HH_n}{f}^2,
$$
since  $(\psi_k)$  is an upper semi-frame in $\H$. This   indeed shows that $(S^{n/2}\psi_k)$ is an upper semi-frame in $\HH_n$.

(2) Let $\Psi$ be $n_0$-regular. Then, for  $n \leq n_0$ and any $f\in \HH_n$, one has
\begin{align*}
\sum_k |\ip{\psi_k}{f}_{\HH_n}|^2 &\leq \sum_k |\ip{S^{-n/2}\psi_k}{S^{-n/2}f}_{\HH_0}|^2 =  \sum_k |\ip{\psi_k}{S^{-n}f}_{\HH_0}|^2 \\
&\leq B \norm{\HH_0}{S^{-n}f}^2 =B  \norm{\HH_{2n}}{f}^2 \leq B'\norm{\HH_{n}}{f}^2,
\end{align*}
since the embedding $\HH_{2n} \to \HH_{n}$ is continuous.
 \end{proof}
 
In particular, a regular upper semi-frame $(\psi_k)$ in $\H$ is automatically an upper semi-frame in $\HH_2= R_S= \dom(S^{-1}) $.

\berem 
It is noteworthy that this condition of total regularity (for frames, in fact) was already introduced in the context of partial inner product spaces \cite[Sec. 3.4.4]{jpa-pipspaces},
 but in a different perspective. There, indeed, one starts with a given partial inner product space $V_I$ and asks under which conditions a family of vectors $\Psi = (\psi_k)$
 may constitute a frame. The argument runs as follows.  
The purpose of a frame is to expand an arbitrary vector into simple elements, as in  \eqref{eq:expansion}, and in practice this expansion will be truncated after finitely
 many terms for approximation. Now, in a partial inner product space, finite rank projections must have their range in the small space $V^\#$. 
Therefore, one has to require that the frame vectors (or basis vectors, as well) $\psi_k$ must belong to $V^\#$, which is precisely the space $\HH_{\infty}$ in the present context. 
Here, on the contrary, the semi-frame $\Psi$ itself generates the scale, which thus allows much more singular situations.
\enrem 

 \subsection{\sc Lower semi-frames and duality}
\label{subsec:simplexamples} 

To start with,  two sequences $(\psi_k), (\phi_k)$ are said to be \emph{dual} to each other if one has, for every $f\in\Hil$,
\begin{equation} \label{ff}
f=\sum _{k\in \Gamma}\ip {\phi_k}{f}\, \psi_k = \sum _{k\in \Gamma} \ip {\psi_k}{f}\,  \phi_k .
\end{equation}
We are going to explore to what extent this notion applies to upper and lower semi-frames

To be precise, we say that  a    sequence $\Phi=\{\phi_{k}\}$
 is a \emph{lower semi-frame} if  it satisfies the lower frame condition, that is, there exists   a constant ${\sf m}>0$ such that
\be
{\sf m}  \norm{}{f}^2 \le  \sum _k |\ip{\phi_{k}}{f}| ^2 \, , \;\; \forall \, f \in \Hil.
\label{eq:disclowersf} 
\end{equation}
Clearly, \eqref{eq:disclowersf} implies that the family $\Phi $ is total in $\Hil$. Notice there is a slight dissimilarity between the two definitions of semi-frames.
In the upper case \eqref{eq:discr-unbddframe}, the positivity requirement on the left-hand side ensures that the sequence $\Psi$ is total, 
whereas here, it follows automatically from the lower frame bound. 
Before exploring further the duality between the two notions, let us give some simple examples. 

 Let $(e_k), k\in \NN,$ be an orthonormal basis in $\Hil$.
 Let $\psi_k = \frac{1}{k} e_k$. Then $(\psi_k)$ is an upper semi-frame:
$$
0 <  \sum_{k\in \NN} | \ip{\psi_k }{f}|^2 \leq \sum_{k\in \NN} | \ip{e_k }{f}|^2 = \|f\|^2.
$$
Indeed, there is no lower frame bound, because for $f=e_{p}$, one has $\sum_{k\in \NN} | \ip{\psi_k }{f}|^2=\frac{1}{p^2}$.

  Let  $\phi_{k}= k\, e_{k}$. The sequence $(\phi_{k})$ is dual to $(\psi_k)$,  since it obviously satisfies the relations \eqref{ff}.
In addition, we have
$$
\sum_{k\in \NN} | \ip{e_k }{f}|^2 = \|f\|^2 \leq \sum_{k\in \NN} | \ip{\phi_k }{f}|^2,
$$
and this is unbounded since $\sum_{k\in \NN}| \ip{\phi_k }{f}|^2 = p^2$ for $f=e_{p}$. Hence,  $(\phi_{k})$ is a lower semi-frame, dual to  $(\psi_k)$.

  In this case, in the basis $(e_k)$, the frame operator associated to $(\psi_k)$  is $S= \mathrm{diag}({1}/{n^2})$. 
Thus $S^{-1}= \mathrm{diag}(n^2)$, which is clearly unbounded.
It follows that $(\phi_{k})$ is the canonical dual of $(\psi_k)$, since $\phi_{k} = S^{-1} \psi_k $. 
 The sequence used by Gabor  in his original IEE-paper \cite{gabor}, a Gabor system with a Gaussian window, $a=1$ and $b=1$, is exactly 
such an upper semi-frame. 

Similarly, $G^{-1}= \mathrm{diag}(n^2) $, acting in $\ell^2$,  so that the inner products of the three spaces in \eqref{eq:disctriplet} are, respectively:
\begin{itemize}
\vspace{-1mm}\item[{\bf .}] For ${\h}_{\Psi}: \quad\ip{c}{d}_{\Psi} = \sum_{n}{n^2}\, \ov{c_{n}}\,d_{n}$; 
\vspace{-1mm}\item[{\bf .}]   For ${\h}_0: \quad\;\ip{c}{d}_{0} =\sum_{n}  \ov{c_{n}}\,d_{n}$; 
\vspace{-1mm}\item[{\bf .}]   For ${\h}_{\Psi}^{\times}: \quad\ip{c}{d}_{\Psi}^{\times} =\sum_{n} \frac{1}{n^2} \,\ov{c_{n}}\,d_{n}$. 
 \end{itemize}

Both $C\Psi = (C\psi_k)$ and $C\Phi =  (C\phi_k)$ live  in ${\h}_{\Psi}$,
since $\{\psi_k\}_{n} = \frac{1}{k}\,\delta_{kn}$ and  $\{\phi_k\}_{n} = {k}\,\delta_{kn}$.
 In addition, the upper semi-frame $\Psi$ is totally regular, for the same reason.

Now, in this example, we can identify the end spaces in the scale $\h_{n}:= D(G^{-n/2}), nÊ\in \ZN$. We get
\be\label{eq:schwartz}
\h_{\infty}(G^{-1/2})=\bigcap_{n\in \ZN} \h_n = s, \qquad \h_{-\infty}(G^{-1/2})=\bigcup_{n\in \ZN} \h_{n} = s^\times,
\en
the space of fast decreasing, resp. slowly increasing, sequences (the so-called Hermite representation of tempered distributions 
\cite{simon-herm}). And, indeed, $s$ is a nuclear space, the proof using precisely this representation.

{The example $(\frac{1}{k}e_k), (ke_k)$ can be generalized to weighted  sequences  $(\psi_{k}), (\phi_{k})$, with
$\psi_{k}:=m_k e_k$,  $\phi_{k}:=\frac{1}{m_k} e_k$,  where $m\in\ell^\infty$  has a subsequence converging to zero and $m_k\neq 0, \,\forall \,k$.
Hence the former is an upper semi-frame and not a frame, 
whereas the latter satisfies the lower frame condition, but   not the upper one.
For instance, the sequence $(\frac{1}{2} e_1, \frac{1}{2} e_2, \frac{1}{2^2} e_1, \frac{1}{3} e_3, \frac{1}{2^3} e_1, \frac{1}{4} e_4, \dots )$ is an upper semi-frame of that type.
The frame operator associated to the sequence $(m_k e_k)$ is still diagonal, namely,  $S= \mathrm{diag}(m_{n}^2)$. Thus $S^{-1}= \mathrm{diag}(m_{n}^{-2})$, which is clearly unbounded,
and $\psi_{k}  = S^{-1}\phi_{k}$. The inner products read as:
\begin{itemize}
\vspace{-1mm}\item[{\bf .}] For ${\h}_{\Psi}: \quad\ip{c}{d}_{\Psi} = \sum_{n}{m_{n}^{-2}}\, \ov{c_{n}}\,d_{n}$; 
\vspace{-1mm}\item[{\bf .}]   For ${\h}_0: \quad\;\ip{c}{d}_{0} =\sum_{n}  \ov{c_{n}}\,d_{n}$; 
\vspace{-1mm}\item[{\bf .}]   For ${\h}_{\Psi}^{\times}: \quad\ip{c}{d}_{\Psi}^{\times} =\sum_{n} m_{n}^{2} \,\ov{c_{n}}\,d_{n}$. 
 \end{itemize}
Here too, the upper semi-frame $(\psi_{k})$ is totally regular, since  $\{\psi_k\}_{n} = m_k\,\delta_{kn}$.}
The considerations made above about the triplet \eqref{eq:disctriplet} or the relations \eqref{eq:schwartz} can be made in the case of weighted sequences as well.
For instance, if the sequence $({1}/{m_k})$ grows polynomially, one gets the same result : the end spaces $\h_{\infty}(G^{-1/2})=\bigcap_{n} \h_n$,
 resp. $\h_{-\infty}(G^{-1/2}) =\bigcup_{n} \h_{n}$  still coincide with $s$ and $ s^\times$, respectively. 

Furthermore, if $\Psi$ is an upper semi-frame and there exists weights  $m=(m_n)$ such that   $(\phi_k):=(m_k\psi_k)$ is a frame for $\Hil$, then the following series 
expansions hold true:
\be\label{eq:shifting}
 f = \sum_{k\in\Gamma} \ip {\phi_k}{f} \widetilde\phi_k  = \sum_{k\in\Gamma} \ip {\psi_k}{f}\, \overline{m}_k \,\widetilde\phi_k
=  \sum_{k\in\Gamma} \ip {\overline{m}_k \,\widetilde\phi_k}{f} \psi_k, \ \forall f\in\Hil,
\en
where $(\widetilde\phi_k )$ is the canonical dual of  $(\phi_k )$.
Thus,  the sequence $(\overline{m}_k \,\widetilde\phi_k) $ is   {\xxl a} dual  of $(\psi_k)$. This covers the simple examples above.
This is linked to the invertibility of multipliers \cite{stobal09}, in particular to the following questions: (i) Can the inverse
be represented as in \eqref{eq:shifting}  after shifting weights?  (ii) Is the inverse   a multiplier again? See \cite{ stobal11,stobalrep11}.

The next step would be to consider sequences of the form $\Psi = (\psi_k) :=   (Ve_k)$,  where $V$ is a nice, but nondiagonal operator.
According to \cite[Proposition 4.6]{BSA}, we have the following situations:
\bei
\vspace*{-1mm}\item[(i)] $\Psi$ is a Riesz basis if and only if  $V : \Hil \rightarrow \Hil$ is a bounded bijective operator.
\vspace*{-1mm}\item[(ii)] $\Psi$ is a frame if and only if  $V : \Hil \rightarrow \Hil$ is a bounded surjective operator.
\vspace*{-1mm}\item[(iii)] $\Psi$ is an upper semi-frame if and only if  $V : \Hil \rightarrow \Hil$ is a bounded operator.
\vspace*{-1mm}\item[(iv)] $\Psi$ is a lower semi-frame if and only if $V : \dom(V) \to \Hil$ is a densely defined operator 
such that $e_k\in\dom(V), \,\forall \,k\in\Gamma$, $V^*$ is injective with bounded inverse 
on $\ran(V^*)$, and $V(\sum_{k=1}^n c_k e_k)\to V(\sum_{k=1}^\infty c_k e_k)$ as $n\to\infty$ for every $\sum_{k=1}^\infty c_k e_k\in\dom(V)$. 
\eni
From the discussion of the simple example above, we see that  $(\psi_k) =   (Ve_k)$ is an upper semi-frame  and not a frame if and
only if $V$ is a bounded operator with dense range not equal to $\H$, so that $V^*$ is injective.  This is clearly impossible with any  finite rank operator.
On the other hand, if one takes $V$ to be block-diagonal, with finite dimensional blocks, one is led to fusion
(semi-)frames, as described in Section \ref{sec:rankn-discrframes} (note, however, that the subspaces constituting fusion frames can also be infinite dimensional). 
One might also end up with a controlled (semi-)frame \cite{weightedfr}. 
A whole research field opens up here.

A  useful property of a frame $\Psi = (\psi_k)$ for $\Hil$ is that every element in $\Hil$ can be represented as a series expansion  of the form \eqref{ff}
via some sequence $\Phi = (\phi_k)$. However, there exist Bessel sequences $\Psi$ for $\Hil$  which are not frames and for which \eqref{ff} holds 
via a sequence $\Phi$ , for example, the sequences $\Psi=(\frac{1}{k}e_k)$ and $\Phi=(ke_k)$ discussed above. 
Thus, the frame property is sufficient, but not necessary for series expansions of the form \eqref{ff}.
{As a matter of fact, if one requires a series expansion via a Bessel sequence  which is not a frame, then the dual sequence cannot be Bessel, because of the following result
\cite[Proposition 6.1]{casaz-han} .
\begin{prop.}  \label{prop:nobessel}
If two Bessel sequences $(\psi_k), (\phi_k)$ are dual to each other, then both of them are frames.
\end{prop.}
The simple examples above  give  series expansions \eqref{ff} for all the elements of the space.} The next step is to investigate in general
 what are the possibilities for series expansions via upper semi-frames.

First we note that these simple examples lead us to the general notion  of duality. Indeed, if $\Psi$ is a frame with bounds ({\sf m}, {\sf M}), its canonical dual $\widetilde\Psi$
 is a frame with bounds $({\sf M}^{-1}, {\sf m}^{-1}) $.
Now, formally, an upper semi-frame  $\Psi$ corresponds to  ${\sf m} \rightarrow 0$,  
and yields $S$ bounded,  $S^{-1}$ unbounded. Thus the `dual' $\widetilde\Psi$  should be a sequence satisfying
 the lower frame condition (no finite upper bound, $M \rightarrow \infty$), which would then  correspond to $S$ unbounded and  $S^{-1}$ bounded.
 Actually this idea is basically correct, with some minor qualifications.
Indeed, for an upper semi-frame,   $S:\Hil\to\Hil$ is a bounded injective operator and $S^{-1}$ is unbounded. For a lower semi-frame,   
$S: {\sf Dom}(S)\to\Hil$ is an injective operator, possibly unbounded, with a bounded inverse $S^{-1}$. Indeed, if $\Psi$ is a lower frame sequence for $\Hil$ 
with lower frame bound  {\sf m}   and if $Sf=0$ for some $f\in {\sf Dom}(S)$, then $\ip{f}{Sf} =0$ and thus $\sum_n  |\ip{\psi_n}{f}|^2=0$, which implies that $f=0$, because $\Psi$ is total;
 furthermore, for $f\in {\sf Dom}(S)$ one has ${\sf m}\|f\|^2\leq \ip {f}{Sf}\leq \|Sf\|\cdot\|f\|$, which implies that
 $\|S^{-1}g\|\leq \frac{1}{m}\|g\|$, $\forall g\in {\sf Dom}(S^{-1})$.

Thus there is an almost perfect symmetry (or duality) between two classes of \emph{total}  sequences, namely, those  satisfying the upper frame condition, that is,  upper semi-frames, 
and those  satisfying the lower frame condition, that is,  lower semi-frames. 
For the sake of completeness, we reproduce  two known  results.
\belem   {\rm \cite[Lemma 3.1]{casoleli1}}
Given any total family $\Phi =\{\phi_{k}\}$, the associated analysis operator $C$ is closed. Then $\Phi $ satisfies the lower frame condition,
 i.e. it is a lower semi-frame,  if and only if $C$ has closed range and is injective.
\label{lem:discranalop}
\enlem
Then the main result is the following:
 \begin{prop.} {\rm \cite[Proposition 3.4]{casoleli1}}
Let $\Phi =\{\phi_{k}\}$ be any total family in $\Hil$. Then $\Phi $ is a lower semi-frame if and only if there exists an upper semi-frame
$\Psi $ dual to $\Phi $, in the sense that
$$
f = \sum_{k}  \ip{\phi_{k}}{f} \, \psi_{k}   , \;\; \forall\, f\in \dom(C).
$$
\label{prop:discrduality}  
\end{prop.}
\vspace*{-5mm}
Further results along these  lines may be found in \cite{BSA,casoleli1} to which we refer.
We will explore this symmetry in detail in another work  \cite{jpa-Besseq}.

\section{\sc Generalization of discrete frames}
\label{sec:rankn-discrframes}

\subsection{\sc Fusion (semi-)frames }
\label{subsec:fusionframes}

Rank-$n$ frames were introduced in \cite[Section 2]{jpa-contframes} in the general case of a measure space $(X,\nu)$.
Now, in the purely discrete case, $X$ a discrete set, this concept obviously reduces to an ordinary frame. 
Yet there are plenty of nontrivial generalizations, as soon as one attributes weights to the various subspaces. 

The first step is to consider \emph{weighted frames}, studied in \cite{weightedfr}. Given a set of positive weights $v_{k}>0$, the family 
$\{{\psi}_{k}, k\in \Gamma\}$ is a weighted frame if
$$
{\sf m}  \norm{}{f}^2  \le \sum_{k\in \Gamma}  v_{k}^2\, |\ip{\psi_{k}}{f}| ^2   \le {\sf M}  \norm{}{f}^2 ,  \forall \, f \in \Hil.
$$
Suppose now the weights are constant by blocks of finite size $n_{ j}$, so that one has
$$
{\sf m}  \norm{}{f}^2  \le \sum_{j\in J}  v_{j}^2 \sum_{i=1}^{n_{j}} |\ip{\psi_{ij}}{f}| ^2   \le {\sf M}  \norm{}{f}^2 ,  \forall \, f \in \Hil.
$$
Then, for each $j$, the family  $\{\psi_{ij}, i = 1, 2, \ldots , n_{j} \}$ is a frame for 
 its span, call it $ \Hil_{j}$, which is  at most $n_{j}$-dimensional.  Call $\pi_{\Hil_{j}}$ the corresponding orthogonal projection. 
Let ${\sf m} _j, {\sf M} _j$ be the frame bounds,
$$
{\sf m} _j \norm{}{\pi_{\Hil_{j}} f}^2 \le \sum_{i=1}^{n_{j}} |\ip{\psi_{ij}}{f}| ^2 \le {\sf M} _j \norm{}{\pi_{\Hil_{j}} f}^2,
$$
and   assume  that  ${\sf m}_{\rm inf} := \inf_j {\sf m}_j > 0$ and ${\sf M} _{\rm sup} := \sup_j {\sf M} _j < \infty$.
Then we get 
\be\label{eq:fusionframe}
\frac{\sf m}{{\sf M} _{\rm sup}}  \norm{}{f}^2  \le \sum_{j\in J} v_{j}^2 \,\norm{}{\pi_{\Hil_{j}}f}^2 
 \le \frac{\sf M}{{\sf m}_{\rm inf}}  \norm{}{f}^2 ,  \forall \, f \in \Hil.
\en

In that case, the family $\{\Hil_{j}\}_{ j\in J} $ is a \emph{fusion frame}\footnote{Initially called `frame of subspaces' in  \cite{Casaz-kutyn}.}
 with respect to the weights $\{v_{j} \} _{ j\in J} $, a notion introduced by Casazza and Kutyniok \cite{Casaz-kutyn,Casaz2, sun}.
 Actually, in the general definition, the subspaces
 $\{\Hil_{j}\}_{ j\in J} $ are closed subspaces of $\Hil$, of arbitrary dimension. This structure nicely generalizes frames, in particular, it yields an associated analysis, 
synthesis and frame operator and a dual object.

Given the family $\{\Hil_{j}\}_{ j\in J}  $, one considers their direct sum
$$
\Hil^{\oplus} := \bigoplus_{ j\in J} \Hil_{j} = \{ \{f_{j}\}_{ j\in J }: f_{j}\in \Hil_{j}, \sum_{j\in J}\norm{}{f_{j}}^2 < \infty\}\}\, ,
$$
and this is the ambient Hilbert space. In terms of $\Hil^{\oplus}$, one considers, following the standard pattern,\footnote{In the definitions given in
 \cite[Sec. 3.6]{jpa-bal}, the notations  for the two operators $C_{W,v}$ and $D_{W,v}$ have been interchanged.}

(i) The  {analysis  operator} $C_{W,v} : \Hil \to \Hil^{\oplus}$;

(ii) The  {synthesis operator} $D_{W,v} = {C_{W,v}}^{^{\scriptstyle\!\!\!\!\!\ast}} : \Hil^{\oplus} \to \Hil$;

 (iii)  The  {frame operator} $S_{W,v} : \Hil \to \Hil$ given, as usual, by $S_{W,v}=  {C_{W,v}}^{^{\scriptstyle\!\!\!\!\!\ast}}  \;C_{W,v} $.

Most of the standard results about ordinary frames extend to fusion frames, for instance, the duality relation and the reconstruction formula. 
We refer to \cite{jpa-bal} or the original papers for details.

In \cite[Theorem 3.2]{Casaz-kutyn} and \cite[Theorem 2.3]{Casaz2} , the authors establish an equivalence between frames and fusion frames, under some mild conditions.
We will now extend this result to semi-frames, following the same scheme. By analogy with the frame case, we define an \emph{upper fusion semi-frame} in $\H$ as a family of closed subspaces
$\{ \Hil_{j}, j\in J\}$ for which the following relation holds, with some weights $v_j \neq 0$ and an upper bound ${\sf M}<\infty$:
$$
0 <  \sum_{j\in J} v_{j}^2 \,\norm{}{\pi_{\Hil_{j}}f}^2  \leq  {\sf M}   \norm{}{f}^2 ,  \forall \, f \in \Hil.
$$
First we consider a  family  of  closed subspaces of $\H$ and we build an upper fusion semi-frame out of them.
Conversely, given an upper fusion semi-frame, we build an upper semi-frame.
\begin{prop.}
(1) Given two index sets $J$ and $I_j$, finite or not, let the family  $\{\psi_{ij}, j\in J, i \in I_{j} \}$ be an upper semi-frame in $\H$ 
with bound ${\sf M}$.
For each $j\in J$, denote by $ \Hil_{j}$ the   closure of ${\sf Span} \{\psi_{ij}, i\in I_j\}$ and
assume that $\{\psi_{ij}, i \in I_{j}\}$ is a lower semi-frame in $\Hil_{j}$ with lower bound ${\sf m} _j$. 
Then  the family $\{ \Hil_{j}, j\in J\}$, is an upper fusion semi-frame in $\H$, with upper bound $ {\sf M}$ and weights  ${\sf m} _j$.

(2) Conversely, for every $j\in J$, let  $\{\psi_{ij},  i \in I_{j} \}$ be an upper semi-frame for the closure  $\H_j$ of their span, with upper bound
 ${\sf M} _j$. Assume that ${\sf M}  := \sup_j {\sf M} _j < \infty$ and that the family $\{ \Hil_{j}, j\in J\}$, is an upper fusion semi-frame in $\H$
 with weights $v_j$ and bound $\sf B$.
 Then $\{\psi_{ij}, j\in J, i \in I_{j} \}$ is a weighted upper semi-frame for $\H$ with weights $v_j$ and bound {\sf M}{\sf B}.
\end{prop.}
\begin{proof}
The proof of (1) is similar to that of  \cite[Theorem 3.2]{Casaz-kutyn}. Indeed, by assumption,
$$
{\sf m} _{j }\norm{}{\pi_{\Hil_{j}}f}^2 \leq \sum_{i\in I_j }|\ip{\psi_{ij}}{f}|^2.
$$
Therefore,
$$
 \sum_{j\in J } {\sf m} _j \norm{}{\pi_{\Hil_{j}}f}^2 \leq \sum_{j\in J } \sum_{i\in I_j } |\ip{\psi_{ij}}{f}|^2 \leq  {\sf M} \norm{}{f}^2.
$$
As for (2),we have
\begin{align*}
\sum_{j\in J } \sum_{i\in I_j } v_j^2 |\ip{\psi_{ij}}{f}|^2 
&= \sum_{j\in J }v_j^2 \sum_{i\in I_j }|\ip{\psi_{ij}}{\pi_{\Hil_{j}}f}|^2 \\
&\leq  \sum_{j\in J }v_j^2 \, {\sf M} _j  \norm{}{\pi_{\Hil_{j}}f}^2
\leq {\sf M} \sum_{j\in J }v_j^2 \, \norm{}{\pi_{\Hil_{j}}f}^2 \leq {\sf M}{\sf B}\norm{}{f}^2.
\end{align*}
It is easy to see that the totality condition is satisfied in both cases. 
\end{proof}
\medskip

\subsection{\sc Semi-frames in Banach spaces and beyond}
\label{subsec:banach}

Frames in Banach spaces have been defined and studied by several authors, see for instance \cite{aldroubi,casaz-han,casole-st,sto09}. As we shall see, 
most of their results can be extended to semi-frames as well. Throughout  this section, $X_{d}$ denotes a Banach space of sequences $c=(c_{k}), \, k\in\Gamma$, such that the coordinate
 linear functionals $c\mapsto c_{k}$ are continuous on $X_{d}$ and $\B$ is a separable Banach space, with dual $\B\ha$.

{Since there is no inner product on a general Banach space, a frame is defined as an indexed set of linear functionals $(\mu_{k})$ from $\B\ha$. Thus one defines:
\bei
\vspace*{-1mm}\item[(a)] The family $ (\mu_{k}) \subset \B\ha$ is an \emph{$X_{d}$-frame} for $\B$ if
\bei
\vspace*{-1mm}\item[(i)]   $(\mu_{k}(f))\in X_{d}, \, \forall \,f\in \B$;
\vspace*{-1mm}\item[(ii)] the norms  $\norm{\B}{f}$ and $\norm{X_{d}}{(\mu_{k}(f))}$ are equivalent, \emph{i.e.},  
there exist constants  ${\sf m}>0$ and ${\sf M}<\infty$ such that
 \be
{\rm\sf m} \norm{\B}{f} \le   \norm{X_{d}}{(\mu_{k}(f))}  \le {\rm\sf M}  \norm{\B}{f} ,  \forall \, f \in \B.
\label{eq:Bframe}
\end{equation}
\eni
\vspace*{-1mm}\item[(b)]
The family $ (\mu_{k})$ is an \emph{$X_{d}$-Banach frame} if, in addition, there exists a bounded linear operator
$S: X_{d}\to X$ such that $S (\mu_{k}(f)) = f, \, \forall\, f\in \B$. However, this does not imply the existence of a reconstruction formula in terms of an infinite series.
\vspace*{-1mm}\item[(c)]
 The family $ (\mu_{k})$ is an \emph{$X_{d}$-Bessel sequence} if only the upper inequality in \eqref{eq:Bframe} is satisfied and 
an \emph{$X_{d}$-upper semi-frame} if, in addition,  it is total in $\B\ha$ (that is, $\mu_{k}(f)=0, \forall\, k,$ implies $f=0$). It is an \emph{$X_{d}$-lower semi-frame}
if only the lower inequality in \eqref{eq:Bframe} is satisfied.
\eni
In case $X_{d}= \ell^p$, one speaks of $p$-(semi-)frames, etc \cite{aldroubi}. In fact, a 2-frame can always be reduced to a Hilbert frame.}

The new fact here, as compared to the Hilbert case, is that the properties of the (semi-)frames depend crucially on those of the sequence space $X_{d}$.
A systematic analysis has been made by Casazza \emph{et al.} \cite{casole-st} and Stoeva  \cite{sto09}, to which we refer for more details. 
The crucial property is the following. The space $X_{d}$ is called a CB-space if the canonical unit vectors $\{e_{k}\}$ form a Schauder basis of it.
In that case, the dual  $X_{d}\ha$ may be identified with a sequence space via the isomorphism $h\in X_{d}\ha \leftrightarrow  (h(e_{k})) $.

Using this language, we may quote some results about semi-frames. To that effect, given a sequence $ (\mu_{k})\subset \B\ha$, 
we introduce an associated analysis operator   $C: \B\to X_{d} $ by 
$ C f = (\mu_{k}(f)) $   on the domain  $\dom(C)= \{f\in\B :  (\mu_{k}(f)) \in X_{d}  \}$. 
In general, the subspace $\dom(C)$ need not be closed.
If $ (\mu_{k})$ is an $X_{d}$-frame, however, $\dom(C)= \B$ and $C$ is bounded and an isomorphism. 

Concerning $X_{d}$-upper semi-frames, we have the following results,  reminiscent from Lemma \ref{lem31}(1):

\begin{prop.} 
(a) {\rm \cite[Proposition 3.2]{casole-st}} 
Let $X_{d}$ be a CB-space. Then  $ (\mu_{k})\subset \B\ha$ is an $X_{d}\ha$-Bessel sequence with bound {\sf M}
if and only if the synthesis operator $D': ( d_{k}) \mapsto \sum_{k}d_{k}\mu_{k} $ is bounded from $X_{d}$ into $\B\ha$ and  $\norm{}{D'} \le {\sf M}$. 

(b){\rm \cite[Corollary 3.3]{casole-st}} and  {\rm \cite[Proposition 3.2]{sto09}} 
Assume that    $X_{d}\ha$ is a  CB-space. If $ (\mu_{k})\subset \B\ha$ is an $X_{d}$-Bessel sequence with bound {\sf M}, then 
the synthesis operator $D:  (d_{k})  \mapsto \sum_{k}d_{k}\mu_{k} $ is bounded from $X_{d}\ha$ into $\B\ha$ and
  $\norm{}{D} \le {\sf M}$. The converse is true if $X_{d}$ is reflexive.
\end{prop.}

\begin{cor.}{\rm \cite[Lemma 3.3]{sto09}}
 If $X_{d}$ and $X_{d}\ha$ are both  CB-spaces, $C\ha = D$ and $C = D\ha{|_{\B}} $.
\end{cor.}
As for the last statement, note that $\B \subset \B\haa$ in general, unless $\B$ is reflexive.

Concerning $X_{d}$-lower semi-frames, we have the following result, analogous to Lemma \ref{lem:discranalop}
\belem {\rm \cite[Lemma 3.5]{sto09}}
Let $ (\mu_{k})\subset \B\ha$ satisfy the following restricted lower bound condition:   there exists a constant ${\sf m}>0$ such that
\be
{\sf m}  \norm{\B}{f} \le  \norm{X_{d}}{(\mu_{k}(f))}\, , \;\; \forall \, f \in \dom(C).
\label{eq:disclowersf2} 
\end{equation}
Then the analysis operator $C$ is injective, closed and its range $R_{C}$ is closed in $X_{d}$. The inverse  $C^{-1}: R_{C} \to \dom(C)$ 
is bounded with norm $\norm{}{C^{-1}} \le 1/{\sf m}$.
\enlem
In the case of an $X_{d}$-lower semi-frame, $\dom(C) = \B$ by definition and the result applies. 

Finally, the following theorem  summarizes the case of lower semi-frames.
\begin{theorem}{\rm \cite[Theorem 3.8]{sto09}}
Given a CB-space $X_{d}$   and a sequence $ (\mu_{k})\subset \B\ha$, consider the following conditions:
\bei
\item[$\sf{(b_1)}$] There exists an   $X_{d}\ha$-Bessel sequence $(g_{k})\subset \B$ for $\B\ha$ such that
$$
f = \sum_k \mu_{k}(f)\,g_{k}, \, \forall\,f\in \dom(C);
\vspace*{-2mm}
$$
\item[$\sf{(b_2)}$] The operator $C^{-1} : R_C \to \B$ has a bounded extension to $X_{d}$.
\eni
Then $\sf{(b_1)}$ holds true if and only if $ (\mu_{k})$ satisfies the relation \eqref{eq:disclowersf2} and $\sf{(b_2)}$ holds true.
\end{theorem}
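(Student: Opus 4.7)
The plan is to prove the equivalence by constructing, in each direction, an operator that mediates between the reconstruction formula in $(b_1)$ and the extension of $C^{-1}$ in $(b_2)$, using the CB-space structure to identify synthesis operators with Bessel families.

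For the direction $(b_1) \Rightarrow$ \eqref{eq:disclowersf2} $+ (b_2)$, I would first introduce the synthesis operator $T\colon X_d \to \B$ of the sequence $(g_k) \subset \B$, defined on finite sequences by $T(d_k) = \sum_k d_k g_k$. Because $X_d$ is a CB-space and $(g_k)$ is an $X_d^*$-Bessel sequence for $\B^*$, the symmetric version of Proposition (a) (with the roles of $\B$ and $\B^*$ exchanged, using the canonical embedding $\B \hookrightarrow \B^{**}$) shows that $T$ extends to a bounded operator $T\colon X_d \to \B$ with $\|T\| \le {\sf M}$. The reconstruction formula in $(b_1)$ then reads $T(Cf) = f$ for every $f \in \dom(C)$, that is, $T\circ C = \mathrm{id}_{\dom(C)}$. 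Hence $T|_{R_C} = C^{-1}$ and $T$ furnishes the bounded extension of $C^{-1}$ to $X_d$ required by $(b_2)$. Finally, for $f\in\dom(C)$,
$$
\norm{\B}{f} = \norm{\B}{TCf} \le \norm{}{T}\,\norm{X_d}{(\mu_k(f))},
$$
which yields \eqref{eq:disclowersf2} with ${\sf m} = 1/\norm{}{T}$.

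For the converse direction, I would start from the bounded extension $T\colon X_d \to \B$ of $C^{-1}$ supplied by $(b_2)$ and define $g_k := T e_k \in \B$, where $(e_k)$ is the Schauder basis of $X_d$ provided by the CB-space assumption. For any $f \in \dom(C)$, the sequence $Cf \in X_d$ expands unconditionally as $Cf = \sum_k \mu_k(f)\, e_k$ in the norm of $X_d$; applying the bounded operator $T$ and using $T|_{R_C} = C^{-1}$ (which is forced since $T$ extends $C^{-1}$) yields
$$
f = T(Cf) = \sum_k \mu_k(f)\, T e_k = \sum_k \mu_k(f)\, g_k, \quad \forall f\in\dom(C),
$$
which is the reconstruction identity in $(b_1)$. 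To verify that $(g_k)$ is an $X_d^*$-Bessel sequence for $\B^*$, I observe that $T$ is by construction the synthesis operator of $(g_k)$ from $X_d$ into $\B$; the adjoint $T^*\colon\B^* \to X_d^*$ is therefore bounded with $\norm{}{T^*} = \norm{}{T}$, and a direct calculation on finite combinations identifies $T^*(h)$ with the sequence $(h(g_k))_k \in X_d^*$. This gives precisely the $X_d^*$-Bessel bound $\norm{X_d^*}{(h(g_k))} \le \norm{}{T}\,\norm{\B^*}{h}$, which closes the proof.

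The main conceptual step is the identification, in both directions, of the synthesis operator of the dual sequence $(g_k)\subset \B$ with the bounded map $T\colon X_d \to \B$; this is the place where the CB-space hypothesis is essential, both to define $g_k = Te_k$ unambiguously from the Schauder basis $(e_k)$ and to invoke the Bessel-versus-synthesis equivalence of Proposition (a) (in its version where $\B$ plays the role of $\B^*$). The restricted nature of the lower bound \eqref{eq:disclowersf2} (valid only on $\dom(C)$ rather than on all of $\B$) is precisely what is needed, since the series expansion in $(b_1)$ is itself only claimed on $\dom(C)$, and no further density argument is required.
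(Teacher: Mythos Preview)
The paper does not supply its own proof of this theorem; it is quoted verbatim from \cite[Theorem 3.8]{sto09}, so there is no in-paper argument to compare against. Your proposal is the natural proof and is essentially correct: in each direction you correctly identify the bounded extension $T:X_d\to\B$ with the synthesis operator of the dual family $(g_k)$, invoking the CB-space hypothesis exactly where it is needed (to expand $Cf=\sum_k\mu_k(f)e_k$ in $X_d$, to set $g_k:=Te_k$, and to identify $T^\ast h$ with the sequence $(h(g_k))$ in $X_d^\ast$).

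Two minor points. First, a Schauder basis expansion in a CB-space converges in norm but is not in general \emph{unconditional}; drop that adjective, since you only use norm convergence to push the series through the bounded operator $T$. Second, in the forward direction, the application of Proposition~4.2(a) with the roles of $\B$ and $\B^\ast$ exchanged yields a priori a bounded synthesis map $T:X_d\to\B^{\ast\ast}$; you should make explicit that, because each partial sum $\sum_{k\le N}d_kg_k$ lies in $\B$ and $\B$ is norm-closed in $\B^{\ast\ast}$, the limit actually lands in $\B$, so that $T:X_d\to\B$ as required. With these two cosmetic fixes the argument is complete.
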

Further results, in particular about reconstruction formulas via series expansions, may be found in \cite{casole-st} and   \cite{sto09}.}
\medskip

An interesting alternative  recently proposed in \cite{zhang}, using the notion of semi-inner product, allows us to define (in certain cases) 
$X_{d} $-(semi-)frames as a family of vectors $ (\psi_k)$ from $\B$ instead of $\B\ha$.

A \emph{semi-inner product} \cite{jpa-sips,lumer} on $\B$ is a function $[\cdot,\cdot]$ on $\B \times \B$ such that, for every $f,g,h\in\B$ and $\alpha\in\CN$,
\bei
\vspace*{-1mm}\item $[f,g+h] = [f,g] + [f,h]$,
\vspace*{-1mm}\item $[f,\alpha g ] = \alpha[f,g] $ and $[\alpha  f,g ] = \ov{\alpha}[f,g] $,
\vspace*{-1mm}\item $[f,f ]>0$ for $f\neq 0$,
\vspace*{-1mm}\item $ |[f,g ]|^2 \leq [f,f ]\,[g,g ].$
 \eni
Note, $[\cdot,\cdot]$ is \emph{not} additive in the first factor, lest it becomes a genuine inner product. Actually, every Banach space $\B$ has a semi-inner product 
$[\cdot,\cdot]$ that is compatible, \emph{i.e.}, $[f,f ]^{1/2} = \norm{\B}{f},\, \forall\, f\in\B$.

Using this terminology, one calls  a sequence $ (\psi_k) $ an $X_{d}$-frame for $\B$ if $([\psi_k,f] )\in X_{d}, \forall\, f\in\B$ and there exist  
 constants  ${\sf m}>0$ and ${\sf M}<\infty$ such that
 \be
{\rm\sf m} \norm{\B}{f} \le   \norm{X_{d}}{([\psi_k,f])}  \le {\rm\sf M}  \norm{\B}{f} ,  \forall \, f \in \B,
\label{eq:sipframe}
\end{equation}
and similarly for the other notions. Further steps require additional restrictions on $\B$, namely, that $\B$ be reflexive and strictly convex, that is,
 $\norm{\B}{f+g}= \norm{\B}{f} + \norm{\B}{g}$ for $f,g\neq 0$ implies that $f=\alpha g$ for some $\alpha >0$. In that case,
the duality mapping from $\B$ to $\B\ha$ is bijective, that is, for every functional $\mu\in\B\ha$, there exists a unique $f\in\B$ such that 
$\mu(g) = [f,g], \, \forall\, f\in \B$. This allows us to define in a unified way the various structures such as $X_{d}$-frames, $X_{d}\ha$-frames, etc. 
For a thorough analysis, we refer to \cite{zhang}.

As a last generalization, we note that  Pilipovi\'c \emph{et al.} \cite{pil-sto07,pil-sto08} have extended the construction of frames to 
Fr\'echet spaces, more precisely, to a projective limit of reflexive Banach spaces 
$\B_\infty := \bigcap_{s\in \ZN} B_s$, where, for $s\in \NN_0,  B_{-s} := B_s\ha$ and
$$
\B_\infty \subset \ldots \subset B_2 \subset B_1 \subset B_0 \subset B_{-1}\subset B_{-2}\subset \ldots .
$$
Actually, such a Banach scale is a simple example of partial inner product space \cite{jpa-pipspaces}. Thus this construction might still be generalized considerably.
This will be the subject of future work.

\section{\sc Continuous frames and semi-frames}
\label{sec:contframes}

\subsection{\sc Continuous frames revisited}
\label{sec:contsemiframes}

We turn now to the   continuous  generalized frames, introduced in Section \ref{sec:intro},  eq. \eqref{eq:genframe}. 
A complete analysis has been made in our previous paper \cite{jpa-bal}, so we will be rather brief here.

Let $\Hil$ be a  Hilbert space and   $X$ a  locally compact, $\sigma$-compact, space with measure $\nu$. 
Let   $\Psi:=\{\psi_{x},\, x\in X\}, \,\psi_{x}\in \Hil $ be a continuous frame, as defined in \eqref{eq:frame}.

First,  $\Psi$ is a total set in $\Hil$. Next  define the  \emph{analysis operator} by the (coherent state) map $C_{\Psi}: \Hil \to L^{2}(X, \ud\nu)$ given as 
$$
(C_{\Psi}f)(x) =\ip{\psi_{x}}{f} , \; f \in \Hil,
$$
 with range by $R_{C}:=   \ran(C_{\Psi})$.
 Its adjoint $C_{\Psi}^\ast: L^{2}(X, \ud\nu) \to \Hil$, called the \emph{synthesis operator},
 reads (the integral being understood in the weak sense,  as usual \cite{forn-rauhut})
\be
C_{\Psi}^\ast F =  \int_X  F(x) \,\psi_{x} \; \ud\nu(x), \mbox{ for} \;\;F\in L^{2}(X, \ud\nu)  
\label{eq:synthmap}
\end{equation}
Then $C_{\Psi}^* C_{\Psi} = S$ and $ \|C_{\Psi} f\|^2_{L^{2}(X)}= \| S^{1/2}f\|_{\Hil}^2 = \ip{f}{Sf}$.
 Furthermore,  $C_{\Psi}$ is injective, since $S>0$,  so that $ C_{\Psi}^{-1} : R_{C}  \to \Hil$ is well-defined.

Next,   the lower frame bound implies that $R_{C}$  is a \emph{closed} subspace of $L^{2}(X, \ud\nu )$. 
The  corresponding orthogonal projection is  ${\mathbb P}_{\Psi}:  L^{2}(X, \ud\nu) \to R_C$ defined by
$$
{\mathbb P}_{\Psi}:= C_{\Psi} S^{-1} C_{\Psi}^* = C_{\Psi}  C_{\Psi}^+ ,
$$
where $ C_{\Psi}^+ = S^{-1} C_{\Psi}^*$  is the pseudo-inverse of  $C_{\Psi}$.
The projection ${\mathbb P}_{\Psi}$ is an integral operator with (reproducing) kernel $K(x,y) = \ip{\psi_{x} }{S^{-1}\psi_{y}}$,
 thus $R_C$ is a  \emph{reproducing kernel Hilbert space}.  

In addition, the subspace $ R_{C}$    is also  complete in the norm $\| \cdot\|_{\Psi}$, associated to the  inner product 
 \be
\ip {F}{F'}_{\Psi} := \ip {F}{C_{\Psi} \,S^{-1} \,C_{\Psi}^{-1} F' }_{L^{2}(X)} , \mbox{ for } \; F,F'\in R_{C} .
\label{eq:scalarprod}
\en
Hence  $(R_{C},\| \cdot\|_{\Psi}) $ is a Hilbert space, denoted by ${\Hil}_{\Psi}$, and the map    $C_{\Psi}: {\Hil} \to {\Hil}_{\Psi}$ is unitary.
  Therefore, it can be inverted  on its range by the adjoint operator
 $C_{\Psi}^{\ast (\Psi)}: {\Hil}_{\Psi} \to {\Hil} $,  which is precisely the pseudo-inverse  $C_{\Psi}^+ = S^{-1} C_{\Psi}^*.$
Thus one gets,  for every $f\in \Hil$, a  \emph{reconstruction formula}, with a weakly convergent integral:  
\be
f = C_{\Psi}^{\ast (\Psi)} 
=  \int_X  F(x) \,S^{-1}\,\psi_{x} \; \ud\nu(x), \mbox{ for} \;\;F= C_{\Psi}f \in {\Hil}_{\Psi} .
\label{eq:reconstr}
 \en
We should also note that frame multipliers for continuous frames have been developed recently \cite{bayer}. It remains to be seen how much of this can be extended
to (upper) semi-frames.

\subsection{\sc Continuous upper semi-frames}
\label{sec:contuppersemiframes}

Let  now $\Psi$ be     a (continuous) \emph{upper semi-frame}, that is, there exists ${\rm\sf M}<\infty$ such that
\be\label{eq:upframe}
0 < \int_{X}  |\ip{\psi_{x}}{f}| ^2 \, \ud \nu(x)   \leq {\rm\sf M}  \norm{}{f}^2 , \; \forall \, f \in \Hil, \, f\neq 0 .
\en
 In this case,  $\Psi$ is a total set in $\Hil$, the operators $C_{\Psi}$ and $S$ are bounded, $S$ is injective and self-adjoint.
Therefore $R_S:=   \ran(S)$ is dense in $\Hil$ and $S^{-1}$ is also self-adjoint. 
$S^{-1}$ is unbounded, with dense domain $\dom(S^{-1}) = R_S$. 
 
Define the Hilbert space ${\h}_{\Psi}$ as the   completion of $C_{\Psi}(R_{S})$ with respect to the norm $\| \cdot\|_{\Psi}$    defined in \eqref{eq:scalarprod}.
 Then,  the map $C_{\Psi}$ is an  {isometry} from $\dom(S^{-1}) =   R_{S}$ onto  
 $C_{\Psi}(R_{S}) \subset {\h}_{\Psi}$, thus 
it extends by continuity to a  \emph{unitary} map from $\Hil$ onto ${\Hil}_{\Psi}$.
Therefore, $ {\h}_{\Psi} $ and $  R_{C}$ coincide as sets,
 so that  $ {\h}_{\Psi} $ is  a vector subspace (though not necessarily  closed) of  $ L^{2}(X, \ud\nu)$. 
 
Consider now the operators $G_S = C_{\Psi} \,S \,C_{\Psi}^{-1} : R_{C} \to C_{\Psi}(R_{S})$ and
$G_{S}^{-1}:=C_{\Psi} \,S^{-1} \,C_{\Psi}^{-1} : C_{\Psi}(R_{S}) \to R_{C}$, both acting in the Hilbert space $ \ov{R_C}$.
Then one shows \cite{jpa-sqintegI} that $G_{S} $ is a bounded, positive and symmetric operator, while $G_{S} ^{-1}$ is positive and essentially self-adjoint. 
These two operators are bijective and inverse to each other. Thus one gets the following commutative diagram.
\be\label{diagram2}
\hspace*{-1cm}\begin{minipage}{15cm}
\begin{center}
\setlength{\unitlength}{2pt}
\begin{picture}(100,60)
\put(10,50){$\Hil$}
\put(89,50){${\h}_{\Psi}= R_{C} \subseteq  \ov{R_C}\subseteq L^{2}(X, \ud\nu) $}
\put(20,52){\vector(4,0){65}}
\put(55,54){$C_{\Psi}$}
\put(10,47){\vector(0,-4){30}}
Ê\put(12,17){\vector(0,4){30}}
\put(100,47){\vector(0,-4){30}}
Ê\put(102,17){\vector(0,4){30}}
\put(-35,10){$ \Hil \supseteq \dom(S^{-1})=R_S$}
\put(85,47){\vector(-2,-1){65}}
\put(14,30){$S^{-1}$}
\put(5,30){$S$}
\put(91,30){$G_{S}$}
\put(104,30){$G_{S} ^{-1}$}
\put(93,10){$C_{\Psi}(R_{S}) \subseteq L^{2}(X, \ud\nu)$}
\put(20,12){\vector(4,0){65}}
\put(48,35){$C_{\Psi}^\ast$}
\put(50,5){$C_{\Psi}$}
\end{picture} 
\end{center}
\end{minipage}
\end{equation}
 
Next let   $G = \ov{G_{S}}$ and let $G^{-1}$ be the self-adjoint extension of  $G_{S} ^{-1}$. Both operators are self-adjoint and positive, 
$G$ is bounded and  $G^{-1}$ is densely defined   in $\ov{R_C}$. Furthermore, they are are inverse of each other on the appropriate domains.
 Moreover, since the spectrum of  $G^{-1}$ is bounded away from zero,
the norm $\norm{\Psi}{\cdot}$ is equivalent to the graph norm of {$G^{-1/2} = {\left( G^{-1} \right)}^{1/2}$}, so that
$$
 \ran(G^{1/2}) = \dom(G^{-1/2}) = {\h}_{\Psi} = R_{C} \subset \ov{R_{C}}\subset  L^{2}(X, \ud\nu).
$$

As in the discrete case, we will say that the upper semi-frame  $\Psi=\{\psi_{x},\, x\in X\}$ is \emph{regular}
 if  all the vectors $\psi_{x}, \, x \in X$, belong to $\dom(S^{-1})$. 
In that case,  the discussion proceeds exactly as in the bounded case. In particular,  
the reproducing kernel $K(x,y) = \ip{\psi_{x} }{S^{-1}\psi_{y}}$ is a \emph{bona fide} function on $X\times X$.
One obtains the same weak reconstruction formula,  but restricted to the subspace $R_S = \dom(S^{-1}) $: 
\be
f =   C_{\Psi}^{\ast (\Psi)} F =  \int_X  F(x) \,S^{-1}\,\psi_{x} \; \ud\nu(x),   
\forall\;f\in  R_S ,\;  F= C_{\Psi} f \in {\Hil}_{\Psi}. 
\en
 On the other hand, if  $\Psi$ is not regular, one has to treat the kernel $K(x,y)$ as a bounded sesquilinear  form
 over $ {\Hil}_{\Psi}$ and use the language of distributions, for instance, with a Gel'fand triplet \cite{jpa-bal}.

The construction, originating from \cite[Section 3]{jpa-contframes} and \cite[Section 7.3]{jpa-CSbook}, proceeds exactly 
as in the discrete case of Section \ref{subsec:discG-triplet}.
If  $\Psi$ is regular, 
one has indeed
\be\label{eq:sesqform1}
\iint_{X\times X} \ov{F(x)}K(x,y)F'(y)\; \ud\nu(x) \; \ud\nu(y) = \ip{C_{\Psi}^{-1} F}{SC_{\Psi}^{-1}F'}_\Hil, \; \forall\, F,F'\in {\h}_{\Psi}.
\end{equation}
Since $C_{\Psi}$ is an isometry and $S$ is bounded, the relation \eqref{eq:sesqform1} defines a bounded sesquilinear form over ${\h}_{\Psi}$, namely
\be\label{eq:sesqform2}
K^\Psi(F,F') = \ip{C_{\Psi}^{-1} F}{SC_{\Psi}^{-1} F'}_\Hil,
\end{equation}
and this remains  true even  if  $\Psi$ is not regular.  
Denote by ${\h}_{\Psi}^{\times}$ the Hilbert space obtained by completing  ${\h}_{\Psi}$ in the norm given by this sesquilinear form.
Now, \eqref{eq:sesqform1} and \eqref{eq:sesqform2}  imply that 
$$ 
 K^\Psi(F,F') = \ip{C_{\Psi}^{-1} F}{SC_{\Psi}^{-1} F'}_\Hil = \ip{F}{C_{\Psi} SC_{\Psi}^{-1} F'}_\Psi = \ip{F}{F'}_{L^2} .
$$
Therefore, one obtains, with continuous and dense range embeddings,
\be
 {\h}_{\Psi} \;\subset\; {\h}_{0}\;\subset \; {\h}_{\Psi}^{\times},
\label{eq:conttriplet}
\end{equation}
where 
\begin{itemize}
\vspace{-1mm}\item[{\bf .}]  $ {\h}_{\Psi} = R_{C}$, which is a Hilbert space for the norm 
$\norm{\Psi}{\cdot}=\ip {\cdot}{ G^{-1}  \cdot }^{1/2}_{L^2}$; 
\vspace{-1mm}\item[{\bf .}]   ${\h}_{0} =\ov{{\h}_{\Psi}} = \ov{R_{C}}$ is the closure of ${\h}_{\Psi}$ in $L^{2}(X, \ud\nu)$;
\vspace{-1mm}\item[{\bf .}]  
${\h}_{\Psi}^{\times}$ is  the completion of  ${\h}_{0}$  (or ${\h}_{\Psi}$) in the norm $\norm{\Psi^{\times}}{\cdot}:=\ip {\cdot}{G \cdot }^{1/2}_{L^2}$,
as well as the conjugate dual of ${\h}_{\Psi}$.
\end{itemize}
The rest is as in the discrete case.

In particular, \eqref{eq:conttriplet} is the central triplet of the scale of Hilbert spaces generated by the powers of $G^{-1/2}$,
namely,  $\h_{n}:= D(G^{-n/2}), nÊ\in \ZN$. Here again, one may ask the nature and properties of the end spaces of the scale, $\h_{\pm\infty}(G^{-1/2})$, 
defined in \eqref{scale}. 

The question can be made more precise in the case of the non-regular upper semi-frame of coherent states described in \cite[Section  2.6]{jpa-bal}.
The Hilbert space is $\Hil^{(n)}:=L^{2}({\RN}^{+}, r^{n-1}\ud r), n = \hbox{integer} \geq 1 $.\footnote{
There is some confusion in \cite[Section  2.6]{jpa-bal}, as well as in \cite[Section  5]{jpa-contframes},
namely, both papers use the Hilbert space $\Hil^{(n)}:=L^{2}({\RN}^{+}, r^{n+1}\ud r)$, instead of the present one, but yet the correct function $s(r) = \pi r^{n-1}|\psi (r)|^{2}$ 
and operators $S^{\pm 1}$. The error propagates in Eqs. (2.23) and (2.24) of \cite[Section  2.6]{jpa-bal} and in the various norms and 
expressions for $G^{\pm 1}$, each of them containing an extra factor $r^2$.}
The  vector  $\psi_{x}$ is admissible if it satisfies  the two conditions
 \begin{align*} 
  (i) \;  & \sup_{r \in {\RN}^{+}}{\mathfrak s}(r): = 1 , \mbox {where } {\mathfrak s}(r):=\pi r^{n-1}|\psi (r)|^{2}  \label{eq518}
\\
 (ii) \; & |\psi (r)|^{2} \neq 0, \; \hbox{except perhaps  at isolated points} \; r \in {\RN}^{+}\nn.   
\end{align*}
The frame operator $S$ and its inverse $S^{-1}$ are multiplication operators on $\Hil^{(n)}$, namely 
$$
(S^{\pm1}f)(r) =  [{\mathfrak s}(r)]^{\pm1}f(r).
$$
Since ${\mathfrak s}(r)\leq 1$, the inverse $S^{-1}$ is indeed unbounded and no frame vector $\psi_{x}$ belongs to its domain.
Of course, we have also, for every $m\in \ZN$,
$$
(S^{m}f)(r) =  [{\mathfrak s}(r)]^{m}f(r).
$$
Thus the scale generated by $S^{-1/2}$ consists of the spaces $\Hil_m = \dom(S^{-m/2}), m\in \ZN$, with norm
\be\label{eq:normHm}
 \norm{m}{f}\!\!\!\!^{\widetilde \;\;} = \ip{f}{S^{-m} f}^{1/2} = \left[\int_0^\infty |f(r)|^2\, [{\mathfrak s}(r)]^{-m}  \, {r^{n-1} \ud r }\right]^{1/2} .
\en
However, the end spaces of the scale, namely 
$$
 \Hil_{\infty}(S^{-1/2}):=\bigcap_{m\in \ZN} \Hil_m , \qquad \ \Hil_{-\infty}(S^{-1/2}):=\bigcup_{m\in \ZN} \widetilde \Hil_{m} ,
$$
do not seem to have an easy interpretation.

Let us give an example, in the case $n=1$, for simplicity. First, the semi-frame $\psi_{x}$ is not regular. Indeed one has $|\psi_{x}(r)|^2 = \pi^{-1} {\mathfrak s}(r)$, 
so that the norm \eqref{eq:normHm} reads
$$
 \norm{m}{\psi_{x}}\!\!\!\!^{\widetilde \;\;}
 = \left[\pi^{-1}\int_0^\infty {\mathfrak s}(r)\,  {[{\mathfrak s}(r)]^{-m} \, \ud r }\right]^{1/2}   r   =\infty, \; \forall\, m\geq 1.
$$
Next, take $\psi_x$ such that ${\mathfrak s}(r)=  e^{-\alpha r}, \alpha >0$. Then the norm \eqref{eq:normHm} becomes
$$
\norm{m}{f}\!\!\!\!^{\widetilde \;\;}
=\left[ \pi^{-1}\int_0^\infty  |f(r)|^2\,{e^{\alpha m r} \, \ud r, }\right]^{1/2} \; m=0,1,2,\ldots .  
$$
Thus $f\in\Hil_{\infty}(S^{-1/2})$ if $\norm{m}{f}\!\!\!\!^{\widetilde \;\;}<\infty$, for all $  m=0,1,2,\ldots$.
This condition defines a specific type of function space, but have been unable to identify it explicitly. 

In the same way, one has 
$$
(G^{\pm1}F)(x) = \int_{{\RN}^{+}} e^{ixr}\, \widehat F(r)\, {[{\mathfrak s}(r)]^{\pm1} \, ud r, } 
$$
and, for every $m\in \ZN$,
$$
(G^{m}F)(x) = \int_{{\RN}^{+}} e^{ixr}\, \widehat F(r)\,{[{\mathfrak s}(r)]^{m} \, \ud r. }  
$$
Accordingly, the associated Hilbert scale consists of the spaces $\h_m = \dom(G^{-m/2}), m\in \ZN$, with norm
$$
 \norm{m}{F}  = \ip{F}{G^{-m} F}^{1/2} = \left[2\pi\int_0^\infty |\widehat F(r)|^2\, {[{\mathfrak s}(r)]^{-m} \,\ud r } 
 \right]^{1/2}.
$$
Here too, the end spaces 
$$
\h_{\infty}(G^{-1/2}):=\bigcap_{m\in \ZN}   \h_m , \qquad   \h_{-\infty}(G^{-1/2}):=\bigcup_{m\in \ZN}   \h_{m} ,
$$
do not seem   easy to identify.

\subsection{\sc Lower semi-frames, duality}
\label{sec:duality}

Given a frame $\Psi = \{\psi_{x}\}$,  one says \cite{zakharova} that a frame $\{\chi_{x}\}$ is \emph{dual} to the frame $\{\psi_{x}\}$ if one has, in the weak sense,
 $f = \int_X  \ip{\chi_{x}}{f} \,\psi_{x} \; \ud\nu(x), \; \forall\,f\in\Hil $.
Then the frame $\{\psi_{x}\}$ is dual to the frame  $\{\chi_{x}\}$. This applies, in particular, to a given frame  $\Psi = \{\psi_{x}\}$ and 
its canonical  dual $\widetilde\Psi = \{\widetilde\psi_{x}:= S^{-1}\psi_{x}\}$
We want to extend this notion to semi-frames.
 It is known  \cite{gabardo-han} that an  upper semi-frame  $\Psi$ is a frame if and only if there exists another  upper semi-frame  $\Phi$ which is dual to $\Psi$, in the sense that
$$
\ip{f}{f'} = \int_X  \ip{\phi_{x}}{f} \,\ip{\psi_{x}}{f'} \; \ud\nu(x), \; \forall\, f,f' \in \Hil. 
$$

Let first $\Psi = \{\psi_{x}\}$ be an arbitrary total family in $\Hil$.
Then we define the analysis operator $C_{\Psi}: \dom(C_{\Psi})\to L^{2}(X, \ud\nu)$ as $C_{\Psi}f(x) = \ip{\psi_{x}}{f}$ on the domain
$$
\dom(C_{\Psi}):= \{f\in\Hil : \int_{X}  |\ip{\psi_{x}}{f}| ^2 \, \ud \nu(x) <\infty\}.
$$
Next,    we define the synthesis operator $D_{\Psi}: \dom(D_{\Psi})\to\Hil $ as 
\be
D_{\Psi}F =  \int_X  F(x) \,\psi_{x} \; \ud\nu(x),  \;\;F\in \dom(D_{\Psi}) \subset L^{2}(X, \ud\nu) ,
\label{eq:synthmap2}
\end{equation}
\textcolor{black} {on the domain 
$$
\dom(D_{\Psi})  := \{F\in L^{2}(X, \ud\nu)  : \int_{X} F(x) \,\psi_{x}  \, \ud \nu(x)  \mbox{ converges weakly in $\Hil$ }\}.  
$$
\emph{A priori} they are both unbounded.
Following \cite[Lemma 3.1]{casoleli1} and  \cite[Lemma 3.1 and Proposition 3.3]{BSA},  we have, as in the discrete case}
\begin{lem.}\label{lem:analop}
(i) Given any total family $\Psi$,  
the  analysis operator $C_{\Psi}$ is closed. Then $\Psi $ satisfies the lower frame condition
 if and only if $C_{\Psi}$ has closed range and is injective.

(ii) If  the function $x\mapsto \ip{\psi_{x}}{f}$ is locally integrable for all $f\in\Hil$, then the operator 
$D_{\Psi}$ is densely defined and one has $C_{\Psi} = D_{\Psi}^\ast$.
\end{lem.}
A proof is given in \cite[Lemmas 2.1 and 2.2]{jpa-bal}.
The condition of local integrability is   satisfied for all  $f\in\dom(C_{\Psi})$, but not necessarily for all $f\in\Hil$, unless $\Psi$ is 
an upper semi-frame, since then $\dom(C_{\Psi})=\Hil$.

Finally, one defines the frame operator as $S=D_{\Psi}  C_{\Psi}$, so that, in the weak sense,
$$
 {Sf} =  \int_{X}  \ip{\psi_{x}}{f} \psi_{x} \, \ud \nu(x) , \;\; \forall \,f\in \dom(S),  
$$
where
$$
\dom(S)  := \{f\in \Hil  : \int_{X} \ip{\psi_{x}}{f}\, \psi_{x}   \, \ud \nu(x)  \mbox{ converges weakly in $\Hil$ }\}.
$$
Notice that one has in general $\dom(S) \subsetneqq \dom(C_{\Psi})$. As in the discrete case  \cite[Lemma 3.1]{BSA}, one has
$\dom(S) = \dom(C_{\Psi})$ if and only if $\ran(C_{\Psi})\subseteq \dom(D_{\Psi}) $. This happens, in particular, for an upper semi-frame $\Psi$, for which one has
$\dom(S) = \dom(C_{\Psi}) = \Hil$.

For an upper  semi-frame,   $S:\Hil\to\Hil$ is a bounded injective operator and $S^{-1}$ is unbounded. If $\Phi=\{\phi_{x}\} $ satisfies the lower frame condition,  
 then $S: \dom(S)\to\Hil$ is an injective operator, possibly unbounded, with a bounded inverse $S^{-1}$. 
However, if the upper frame inequality is not satisfied, $S$ and $C_{\Psi}$ could have nondense domains,  
 in which case one cannot define a unique adjoint $C_{\Psi}^*$ and $S$ may not be self-adjoint. However, if $\psi_{y}\in \dom(C_{\Psi}),\, \forall\, y\in X$,
 then $C_{\Psi}$ is densely defined, $D_{\Psi}\subseteq C_{\Psi}^*$ and $D_{\Psi}$ is closable. Finally, $D_{\Psi}$ is closed if and only if  $D_{\Psi}= C_{\Psi}^*$. 
Then $S=C_{\Psi}^*C_{\Psi} $ is self-adjoint \cite[Lemmas 5.3 and 5.4]{jpa-bal}.
\smallskip

Next, we say that  a    family $\Phi=\{\phi_{x}\}$  is a \emph{lower semi-frame} if 
it satisfies the lower frame condition, that is, there exists   a constant ${\sf m}>0$ such that
\be
{\sf m}  \norm{}{f}^2 \le  \int_{X}  |\ip{\phi_{x}}{f}| ^2 \, \ud \nu(x) , \;\; \forall \, f \in \Hil.
\label{eq:lowersf}
\en
 Clearly, \eqref{eq:lowersf} implies that the family $\Phi $ is total in $\Hil$.
With these definitions, we obtain a nice duality property   between upper and lower semi-frames .  
\begin{prop.}\label{prop:duality}  
 (i) Let $\Psi = \{\psi_{x}\}$ be an upper semi-frame, with upper frame bound {\sf M} and let $\Phi =\{\phi_{x}\}$ be a total family dual to $\Psi$.
Then $\Phi$ is a lower  semi-frame, with lower frame bound ${\sf M}^{-1}$. 

(ii) Conversely, if $\Phi =\{\phi_{x}\}$ is a lower semi-frame, there exists an upper semi-frame
$\Psi= \{\psi_{x}\} $ dual to $\Phi $,  that is, one has, in the weak sense,
$$
f = \int_X  \ip{\phi_{x}}{f} \, \psi_{x} \, \ud\nu(x)  , \;\; \forall\, f\in \dom(C_{\Phi}).
$$
\end{prop.}
A proof may be found in \cite[Lemma 5.5 and Proposition 5.6]{jpa-bal}. In the same paper (Sections 5.6 and 5.7), we have presented
 concrete examples of a non-regular upper semi-frame (the example from affine coherent states discussed in Section \ref{sec:contuppersemiframes})
 and of a lower semi-frame (from wavelets on the 2-sphere).

\section{\sc Frame and semi-frame equivalence}
\label{sec: frame-equiv}

An interesting notion, developed in \cite{jpa-contframes}, is that of \emph{frame equivalence}. Actually there are several different notions here.
In the sequel, we consider the so-called  rank-$n$ frames, but all the statements below are valid \emph{verbatim} for regular upper semi-frames, since only $S$ is involved, not $S^{-1}$.
\medskip

In the general case of a measure space $(X,\nu)$ \cite[Section 2]{jpa-contframes}, a rank-$n$ frame  consists   of a collection of
$n$-dimensional subspaces, one for each $x\in X$, with orthonormal basis\footnote{A general basis would suffice.}  $\{{\psi}^{i}_{x}\}, i = 1, 2, \ldots , n < \infty$, 
and for which there exists a positive operator $S \in GL(\Hil)$, the frame operator,  such that, with weak convergence,\footnote{
Here we use the tensor product notation $(\xi \otimes \ov{\eta })f = \ip{\eta}{f}\xi $ ; in Dirac notation, this means $\xi \otimes \ov{\eta} =| \xi\rangle \langle \eta|. $}
$$ 
\sum_{i=1}^{n}\int_{X} ({\psi}^{i}_{x}\otimes    \ov{\psi^{i}_{x}}) \, \ud \nu (x) = S .
 $$
Introduce the rank-$n$ projection operator
$$
 \Lambda(x) = \sum _{i=1}^{n}{\psi}^{i}_{x}\otimes    \ov{\psi^{i}_{x}}, \quad\mbox{for each } x \in X,
 $$
\vspace*{-2mm}
 so that
$$ 
\int_{X}\Lambda(x) \, d\nu (x) = S,
 $$
or, equivalently,  there exist constants  ${\sf m}>0$ and ${\sf M}<\infty$ such that
\be\label{eq:rankn-frame}
{\sf m}  \norm{}{f}^2 \leq \ip{f}{Sf} = \int_{X}  \norm{}{\Lambda(x)f} ^2 \, \ud \nu(x) \leq {\sf M}  \norm{}{f}^2 ,  \forall \, f \in \Hil.
\vspace*{-2mm}
\en
We denote such a rank-$n$ frame as $ \F (\psi^{i}_{x} , S) $. If we consider only the operator $\Lambda(x)$ or, equivalently, 
the $n$-dimensional subspace spanned by the vectors $\{{\psi}^{i}_{x}\}, i = 1, 2, \ldots , n < \infty$, we will speak of the \emph{reproducing triple}
 $\{\Hil, \Lambda, S\}$.
More generally, we can still speak of a reproducing triple if the rank of $\Lambda(x)$ depends on $x$ (it could even be infinite).
The same definitions apply to an upper semi-frame, where $S$ is bounded and invertible, but the inverse $S^{-1}$ is unbounded.

According to the general theory,  the   frame  (or the regular upper semi-frame) $\F (\psi^{i}_{x} , S) $   admits the reproducing kernel $K^{\Psi}$, \emph{i.e.}, 
the $n\times n$ matrix-valued function $K^{\Psi}$ on $X\times Y$, whose elements are given by
$$
 K^{\Psi}_{ij}(x,y) = \ip{ \psi^{i}_{x} }{S^{-1}\psi^{j}_{y} },        \; x,y \in X, \quad i,j = 1,2, \ldots , n      .     
$$
  $K^{\Psi}$ is called the  \emph {frame kernel}. 
\medskip

  \noi \emph{(i) Similar and unitarily equivalent frames:}
\\[1mm]
Two rank-$n$ frames $\F ( \psi^{i}_{x} , S) $ and $\F( \widetilde\psi^{i}_{x} , \widetilde  S) $ are said to be \emph{similar} if there exists an operator 
$T\in GL(\Hil)$ such that
$$ 
\widetilde\psi^{i}_{x} = T\psi^{i}_{x},  \quad  \widetilde S = TST^{*}.
$$
The two frames are called   \emph{unitarily equivalent} if the operator $T$ is unitary. In both cases, the frame kernel is invariant: $ \widetilde{K}(x,y) = K(x,y) $.
\medskip

  \noi \emph{(ii) Gauge equivalent frames:}
\\[1mm]
Clearly, if  $\{\widetilde \psi^{i}_{x} , \; i=1,2, \ldots ,n\}$ is another orthonormal basis of {\sf Span}$\,\Lambda(x)$, 
 there exists an $n\times n$ unitary matrix $\mathcal U (x)$ such that
$$
\widetilde\psi^{i}_{x} = \sum_{j=1}{\mathcal U}_{ij}(x)\psi^{j}_{x} ,    \quad  i=1,2, \ldots ,n . 
\vspace*{-2mm}
$$
 Then the  frame  $\F (\widetilde\psi^{i}_{x} , \widetilde  S) $ is called  \emph{gauge equivalent}  to $\F( \psi^{i}_{x} , \widetilde  S) $. 
The corresponding kernels  are gauge related:
$$ 
K^{\widetilde \Psi}_{ij}(x,y)  = \ip{\widetilde \psi^{i}_{x} }{S^{-1}\widetilde \psi^{j}_{y} }
  = \sum_{k,l=1}^{n} \overline{{\mathcal U}_{ik}(x)}\,K^{\Psi}_{kl}(x,y)\, {\mathcal U}_{jl}(y), 
 $$
 which we shall also write as 
$ K^{\widetilde \Psi }(x,y) = {\mathcal U (x)}^{*}\,K^{\Psi}(x,y)\,\mathcal U (y)$.
\medskip

  \noi \emph{(iii) Kernel equivalent frames:}
\\[1mm]
Two frames are called  \emph {kernel equivalent } if they are both similar and gauge equivalent :
$$   
 \widetilde\psi^{i}_{x} = \sum_{j=1}{\mathcal U}_{ij}(x)T\psi^{j}_{x} ,    \qquad  i=1,2, \ldots ,n \quad \mbox{ and} \quad\widetilde S = TST^{*}. 
 \vspace*{-2mm} $$
Thus here too the kernels are gauge equivalent.
\medskip

  \noi \emph{(iv) Bundle equivalent frames:}
\\[1mm]
Finally, given any two rank-$n$ reproducing triples $\{\Hil, \Lambda, S\}$ and $\{ \Hil , \widetilde \Lambda , \widetilde S \}$, there exists
 a family of rank-$n$ operators $T(x), x \in X$ on $\Hil$ for which
$$ 
\widetilde \Lambda (x) = T(x)\Lambda(x)T(x)^{*}, \quad  \widetilde \psi^{i}_{x} = \sum_{j=1}^{n}{\mathcal U}_{ij}(x)T(x)  \widetilde \psi^{j}_{x}
 \vspace*{-1mm}$$
In that case, the two frames are called  \emph {bundle equivalent}. Notice that here there is no connection between $S$ and $\widetilde  S$. 
Also two bundle equivalent frames are kernel equivalent if the rank-$n$ operator $T(x), x \in X$ is constant, in the sense that  $ T(x) = T \, \Lambda(x)$.

 To summarize, if we denote by $b\,\sim,\, k\,\sim,\, u\,\sim\,$ and $g\,\sim$ the relations of bundle, kernel, unitary and gauge equivalence,
respectively, between two frames in $\Hil$, then the following hierarchical structure emerges:
\begin{center}
\begin{tabular}{ccccc}
$u\,\sim$ &&&&\\
&$\searrow$&&&\\
&&$ k\,\sim$ &$\rightarrow$  &$b\,\sim$\\
&$\nearrow$&&&\\
$g\,\sim$ &&&&
\end{tabular}
\end{center}

Note that frame equivalence still makes sense for $n=1$, the only difference being that the unitary matrix ${\mathcal U}(x)$ reduces to a phase.
This  shows that the relevant quantity is the operator $\Lambda (x)$ or the subspace it spans. Clearly these considerations bring us directly to the fusion frames     described in Section \ref{subsec:fusionframes}.

In order to make the connection precise, let us assume there is a partition   $X = \bigcup_{j\in J} X_j$,  where $v_j^2 := \int_{X_j} \ud \nu(x) < \infty$
such that $\Lambda (x)$ is constant over each $X_j$, call it $\Lambda (x)\up X_j = \Lambda_{j}$. Note that the rank of 
$ \Lambda_{j}$ could be infinite. Then the relation \eqref{eq:rankn-frame} becomes
$$
{\sf m}  \norm{}{f}^2  \le \sum_{j\in J} v_{j}^2 \,\norm{}{ \Lambda_{j}\,f}^2  \le {\sf M}  \norm{}{f}^2 ,  \forall \, f \in \Hil,
$$
\emph{i.e.},  we obtain precisely a fusion frame. Once again, all this extends immediately to upper semi-frames.

\section*{\sc Acknowledgements}

The authors would like to thank D. Stoeva and O. Christensen for helpful comments and suggestions.  
This work was partly supported by the WWTF project MULAC (`Frame Multipliers: Theory and Application in Acoustics', MA07-025).
The first author acknowledges gratefully the hospitality of  the Acoustics Research Institute, Austrian Academy of Sciences,  Vienna, 
and so does the second author towards the  Institut de Recherche en Math\'ematique et  Physique, Universit\'e catholique de Louvain.

\end{document}